\theoremstyle{thmstyleone}%
\newtheorem{theorem}{Theorem}%  meant for continuous numbers
\newtheorem{lemma}{Lemma}
\theoremstyle{thmstyletwo}%
\newtheorem{example}{Example}%
\newtheorem{remark}{Remark}%
\theoremstyle{thmstylethree}%
\newtheorem{definition}{Definition}%
\newtheorem{corollary}{Corollary}
\newcommand{\Ric}{\mbox{Ric}}
\newcommand{\Rscal}{\mbox{R}}
\newcommand{\Hess}{\mathrm{Hess}}
\newcommand{\Lie}{\mathcal{L}}
\newcommand{\trace}{\mbox{tr}}
\newcommand{\divergence}{\mbox{div}}
\newcommand{\grad}{\mbox{grad}}
\newcommand{\spacetime}{(L,\gamma)}
\newcommand{\idsmatter}{(M,g,K,\rho,J)}
\newcommand{\differential}{\text{d}}
\newcommand{\Hyp}{\mathbb{H}}
\newcommand{\Sph}{\mathbb{S}}
\newcommand{\Real}{\mathbb{R}}
\newcommand{\supp}{\mbox{supp}}
\begin{document}

\title[Article Title]{Dynamics of Geometric Invariants in the Asymptotically Hyperboloidal Setting: Energy and Linear Momentum}

%%=============================================================%%
%% GivenName	-> \fnm{Joergen W.}
%% Particle	-> \spfx{van der} -> surname prefix
%% FamilyName	-> \sur{Ploeg}
%% Suffix	-> \sfx{IV}
%% \author*[1,2]{\fnm{Joergen W.} \spfx{van der} \sur{Ploeg} 
%%  \sfx{IV}}\email{iauthor@gmail.com}
%%=============================================================%%

\author*[1]{\fnm{Anna} \sur{Sancassani}}\email{sancassani@math.uni-tuebingen.de}
\equalcont{These authors contributed equally to this work.}

\author*[1]{\fnm{Saradha} \sur{Senthil Velu}}\email{saradha@math.uni-tuebingen.de}
\equalcont{These authors contributed equally to this work.}

\affil*[1]{\orgdiv{Department of Mathematics}, \orgname{University of Tübingen}, \\ \orgaddress{\street{Auf der Morgenstelle 10},  \postcode{72076}, \state{Tübingen}, \country{Germany}}}

\abstract{We investigate the evolution of geometric invariants, as defined by Michel \cite{Michel}, in the context of asymptotically hyperboloidal initial data sets. Our focus lies on the charges of energy and linear momentum, and we study their behavior under the Einstein evolution equations. We construct foliations describing the evolution of asymptotically hyperboloidal initial data sets using hyperboloidal time function. We define E-P chargeability as a property of the initial data set, and we show that it is preserved under the evolution for our choice of time function. This ensures that the charges are well-defined along the evolution, which is crucial for our approach. Along such foliations, we recover the same energy-loss and linear momentum-loss formulae as those derived by Bondi, Sachs, and Metzner \cite{Bondi-vanderBurg-Metzner} while operating under weaker asymptotic assumptions. Our approach is distinct from previous work as we do not utilize conformal compactifications and work directly at the level of the initial data set. 
}

\keywords{
    Asymptotically Hyperboloidal initial data,
    Geometric Invariants,
    Einstein Evolution Equations,
    Energy and Linear Momentum,
    Bondi mass-loss,
    Hyperboloidal time,
    E-P Chargeability
    }

\maketitle
\newpage

\section{Introduction}\label{introduction}
In General Relativity, observers are modeled by spacelike Cauchy hypersurfaces embedded in a spacetime. Under certain decay conditions at infinity, we can define physical quantities measured by these observers %as geometric invariants
depending on the geometry at infinity of the hypersurface. In this paper, we are interested in studying the evolution of these quantities within the spacetime. 

Gravitational radiation propagates through the spacetime and eventually reaches null infinity. Certain observers are better suited for analyzing this radiation in terms of the energy flux at future null infinity (\(\mathscr{I}^+\)). In this paper, we consider observers modeled by asymptotically hyperboloidal initial data sets.

Unlike asymptotically Euclidean hypersurfaces, which extend toward spatial infinity, asymptotically hyperboloidal initial data remain spacelike while reaching future null infinity \(\mathscr{I}^+\) (see, e.g., \cite{Friedrich1983, AnilZenginolu2025}). In order to study the evolution of asymptotically hyperboloidal initial data sets, we adopt the concept of hyperboloidal foliations of the Minkowski spacetime \cite{Anil, AnilZenginolu2025}. These provide a natural foliation of $\mathscr{I}^{+}$ and define time functions which remain regular as one approaches infinity. This is necessary to avoid the phenomenon known as ``freezing of time", in which the time function degenerates at $\mathscr{I}^+$, as seen in the Milne foliation \cite{Anil, AnilZenginolu2025}, or at spatial infinity, as in the case of asymptotically Euclidean initial data.  

Energy and linear momentum on asymptotically hyperboloidal initial data sets have been studied extensively in \cite{JacekJezierski2002, Trautman, Chruciel2004TheTM, Chrusciel1998}, typically derived via the Hamiltonian formalism. In this paper, we instead adopt the formulation developed by Michel \cite{Michel}, where energy and linear momentum are defined as geometric invariants intrinsic to the initial data. These geometric invariants are defined on non-compact Riemannian manifolds that approach a certain background in an asymptotic end, and they correspond to the Killing initial data of the background. The derivation in \cite{Michel} is done in an abstract and general setting. In this paper, we apply it to the case where the charge density is described by the operator of constraints in General Relativity, where the initial data set is asymptotic to the standard hyperboloid in Minkowski. After prescribing conditions on the asymptotic behavior of the geometry on the initial data that ensure integrability, we obtain a charge integral for each Killing initial data of the background hyperboloid. 

These quantities are conceptually distinct from those derived in the Hamiltonian formulation. Some differences can be noted between the charges, particularly in how contractions are taken: in the Hamiltonian approach, contractions are taken with respect to the metric of the initial data, whereas in Michel's framework, the background metric is used. We obtain the same formulae of the energy and linear momentum for our definition of asymptotically hyperboloidal initial data sets. For a detailed review of the various definitions of energy in asymptotically hyperboloidal initial data sets, we point the reader to \cite{Chruciel2004TheTM}.  

In this paper, we study the evolution of the energy and linear momentum defined by Michel directly using the Einstein evolution equations. A central result in this context is the Bondi energy loss formula, which quantifies the decrease in total energy due to gravitational wave emission \cite{Bondi-vanderBurg-Metzner, RKSachs1962, Zhang2006}. While traditionally derived within the Bondi-Sachs formalism at null infinity, the hyperboloidal approach provides an alternative formulation
of energy loss without the use of null coordinates. In this setting, we formulate the change in energy and linear momentum in terms of spacelike initial data and its evolution, offering a natural connection between the spacelike initial value problem and asymptotic radiation properties.

We note here that our result does not rely on conformal compactifications. Conformal methods are often employed in numerical relativity (see, e.g., \cite{Frauendiener1998}) to bring infinity within a finite computational domain and define structures at \(\mathscr{I}^+\). Instead, we work directly in the physical spacetime and track the contribution of individual terms in the charge integrals for energy and linear momentum. This approach ensures that our analysis remains independent of conformal rescalings.

We observe that the evolution of energy and linear momentum in asymptotically hyperboloidal initial data sets closely mirrors the mass loss behavior in the null formalism using Bondi-Sachs coordinates. However, instead of using null coordinates, we work with standard coordinates on the hyperboloid $\Hyp^3$ and an appropriate choice of time function $\tau$ defining the hyperboloidal foliation. A similar result is obtained by \cite{ChenWangYau}, who derive an energy loss formula using the Liu-Yau quasi-local mass in the large sphere limit via optimal isometric embeddings. This definition of quasi-local energy, in the large sphere limit, yields the same mass aspect as the global charge defined by Michel. 

An important observation from the evolution of asymptotically hyperboloidal initial data is that along the foliation, the asymptotics are not preserved and we find that they vary at higher orders. This is a natural effect of the hyperbolic geometry of the initial data under consideration, as opposed to the asymptotics being preserved in the evolution of asymptotically Euclidean initial data sets. We introduce a notion of \emph{E-P chargeability} to describe initial data that admit well-defined energy and linear momentum \`a la Michel. We prove that, starting from an asymptotically hyperboloidal initial data set, the property of E-P chargeability is preserved under evolution with respect to the chosen time function. This is a consequence of subtle cancellations in the evolution of the charges.

The paper is structured as follows. In Section \ref{sec2}, we introduce the spacelike initial value problem and define asymptotically hyperboloidal initial data sets. In Section \ref{sec3}, we review Michel’s definitions of energy and linear momentum and derive their expressions for the class of asymptotically hyperboloidal observers. In Section \ref{sec4}, we discuss our choice of foliation for the evolution. In Section \ref{sec5}, we prove the energy and linear momentum loss for asymptotically hyperboloidal initial data sets. Moreover, we define E-P chargeability and show that it is preserved along the evolution. In Section \ref{sec6}, we explore possible generalizations of the class of observers to which our theorems apply. 

\section*{Acknowledgements}
The authors thank Carla Cederbaum for insightful discussions and Simone Coli for help with figures. 

\section{Preliminaries}\label{sec2}
Let \(\spacetime\) be a spacetime and let $t: \Real\supseteq(a,b) \rightarrow \Real$ be a time function. The level sets of $t$ define a local foliation of the spacetime manifold by spacelike hypersurfaces $\{M_t\}_{t \in (a,b)} \subseteq L$.  The Einstein equations, together with the Gauss--Codazzi and Mainardi equations, impose constraints on the geometry of each embedded hypersurface. These constraints are encoded in the constraint operator $\Phi$, which acts on pairs $(g, K)$,  where \(g\) is a Riemannian metric and \(K\) is a symmetric $(0,2)$-tensor. Specifically, we define  
\[
\Phi(g,K):=
\left(
    \begin{matrix}
         \Rscal+(\trace_gK)^2-|K|_g^2\\
          2\divergence_g(K-\trace_gK\cdot g)
    \end{matrix}
\right),
\]
where \(\Phi\) is a map from the bundle \(\mathcal{M} \times_M S^2(T^*M)\), where $\mathcal{M}$ is the natural bundle of metrics on a manifold $M$, and $S^2(T^*M)$ is the bundle of symmetric $(0,2)$-tensors on $M$, to \(\mathcal{C}^\infty(M) \times \Gamma(T^*M)\).  

For each hypersurface \(M_t\) in the foliation, let \(\!^{(t)}g\) and \(\!^{(t)}K\) denote the metric and second fundamental form induced by the spacetime metric \(\gamma\). The constraint equations then require that  
$$\Phi\left(\!^{(t)}g, \!^{(t)}K\right) = (\rho,J), $$
where $\rho$ and $J$ are, respectively, the energy and momentum density of the spacetime measured by an observer moving along the foliation. 

Let $t_0\in(a,b)$, and let $M:=M_{t_0}$. We can define coordinates $(t,x^i)$ in a spacetime neighbourhood of a point $q\in M$ by flowing coordinates $(x^i)$ defined on an open set $U\subseteq M$ along a timelike vector field $\partial_t\in\Gamma(TL)$ with $dt(\partial_t)=1$. The spacetime metric $\gamma$ can then be written locally as
    \begin{align}\label{eq:gdecomposition}
        \gamma&=-{N}^2d{t}^2+\ \!^{(t)}g_{ij}\,(d{x^i}+{X}^id{t})(d{x^j}+{X}^jd{t}),
   \end{align}
   where $N$ and $X^{i}$ are the \emph{lapse} function and the \emph{shift} vector field of the foliation, and they are related to the choice of time function in the sense that ${\partial_t}\big|_M=N\cdot\nu+X$, where $\nu$ is the future-directed unit normal to $M$.

In vacuum\footnote{In this paper we will only need the vacuum evolution equations, as we will only consider the evolution of initial data that are vacuum outside a compact set.}, the metric and second fundamental form induced on the leaves vary along the foliation as described by the {Einstein evolution equations} \cite{Bartnik2004} 
    \begin{equation}\label{eq::Vacuum-EvoEqns}
    \begin{cases}
        \mathcal{L}_\nu \, {}^{(t)}g = 2\, {}^{(t)}K, \\
        \mathcal{L}_\nu {}^{(t)}K - \dfrac{{}^{(t)} \boldsymbol{\nabla}^2 N}{N} = 2 \left({}^{(t)}K\circ{}^{(t)}K\right) - {}^{(t)} \boldsymbol{\Ric} - (\ \!^{(t)}\operatorname{tr} {}^{(t)}K\ ) {}^{(t)}K.
    \end{cases}
\end{equation}
  
By a fundamental theorem of Choquet-Bruhat and Geroch \cite{YCB,choquetBruhat&Geroch} we know that the converse is also true. A Riemannian manifold $(M,g)$, together with a symmetric $(0,2)$-tensor $K$ satisfying the constraint equations, constitute well-defined initial data for the Cauchy problem associated with the Einstein Equation. In particular, there exists a unique maximal globally hyperbolic development where the initial data set embeds isometrically as a spacelike hypersurface. The above holds for any spacetime dimension $n+1$, but for the purposes of this paper, we will focus on the case $n=3$.

\begin{definition}
    We say that $\idsmatter$ is an \emph{initial data set} for the Einstein Equation if $M$ is a $3$-dimensional manifold, $g$ is a Riemannian metric on $M$ and $K$ a symmetric $(0,2)$-tensor on $M$, such that $\Phi(g, K)=(\rho,J)$. We say that the initial data is \emph{vacuum} if $\Phi(g, K)=0$.
\end{definition}
The globally hyperbolic development of the initial data admits a global time function. At least locally, we can introduce coordinates \((t, x)\) such that the initial hypersurface is given by \(M = \{t = 0\}\) and the spacetime foliates in a neighborhood of \(M\) as \(\{t = \delta\}\) for \(\delta \in (-\varepsilon, \varepsilon)\). The evolution equations \eqref{eq::Vacuum-EvoEqns} then describe the evolution of the induced geometry along this foliation.  

This is the approach we adopt in this paper. Starting with vacuum initial data sets that are asymptotically hyperboloidal in the sense of Definition \ref{Def::AHyperboloidal}, we compute their energy and linear momentum using Michel’s approach. We then analyze the evolution of these quantities along a foliation, defined in Section \ref{sec3}, in a neighborhood of the initial data.

\subsection{Asymptotically Hyperboloidal Initial Data Sets}\label{sec2.1}
We denote by $(\Hyp^3,b,b)$ the hyperboloid of radius $1$ embedded in Minkowski Spacetime as a totally umbilic spacelike hypersurface
\begin{align}\label{Eqn::StandardHyperboloid}
    \Hyp^3:=\{(t,x)\in\Real^{3,1}\ |\ t^2=r^2+1\},\\
    b=\frac{1}{1+{r^2}}dr^2+r^2\sigma_{\alpha\beta}du^\alpha du^\beta,
\end{align}
where $\sigma$ is the standard round metric on the unit sphere. On the hyperboloid, we use coordinates $x=(r,u^J)$, where $u^J\in\Sph^{2}$ denote the standard polar coordinates on the sphere. Throughout the paper we will use the term \emph{hyperboloidal initial data} to refer to $(\Hyp^3,b,b)$.

\begin{definition}\label{Def::AHyperboloidal}
    An initial data set $(M, g, K, \rho, J)$ is \emph{asymptotically hyperboloidal} if there exist compact sets $B\subset M$ and $B_0\subset \Hyp^3$, and a diffeomorphism at infinity
    \[
        \Psi: M\setminus B\rightarrow \Hyp^3\setminus B_0,
    \]
    such that the $(0,2)$-symmetric tensors on $\Hyp^3\setminus B_0$ 
    \[\dot{g}:=\Psi^*g-b \mbox{ and }\ p:=\Psi^*K-\Psi^*g\]
    have the following asymptotic behavior as $r\rightarrow\infty$
    \begin{subequations}\label{Eqn::AHyperboloidal(g,p)}
  \begin{align}
    \dot{g}_{rr} &= \frac{\mathbf{m}_{rr}}{r^5} + \frac{\tilde{\mathbf{m}}_{rr}}{r^{6}} + O_{2}\bigl(r^{-7}\bigr),\\[1mm]
    \dot{g}_{r\alpha} &= \frac{\mathbf{g}_{r\alpha}}{r^{3}} + O_{2}\bigl(r^{-4}\bigr),\\[1mm]
    \dot{g}_{\alpha\beta} &= {}^{0}\mathbf{g}_{\alpha\beta} + \frac{\!^g\mathbf{m}_{\alpha\beta}}{r} + \frac{{}^{g}\tilde{\mathbf{m}}_{\alpha\beta}}{r^{2}} + O_{2}\bigl(r^{-3}\bigr),
  \end{align}
  \begin{align}
    {p}_{rr} &= \frac{\mathbf{p}_{rr}}{r^{5}} + \frac{\tilde{\mathbf{p}}_{rr}}{r^{6}} + O_{1}\bigl(r^{-7}\bigr),\\[1mm]
    {p}_{r\alpha} &= \frac{\mathbf{p}_{r\alpha}}{r^{3}} + O_{1}\bigl(r^{-4}\bigr),\\[1mm]
    {p}_{\alpha\beta} &= {}^{0}\mathbf{p}_{\alpha\beta} + \frac{{}^{k}\mathbf{m}_{\alpha\beta}}{r} + \frac{{}^{k}\tilde{\mathbf{m}}_{\alpha\beta}}{r^{2}} + O_{1}\bigl(r^{-3}\bigr).
  \end{align}
\end{subequations}
Here, $\mathbf{m}_{rr}$, $\mathbf{p}_{rr}$, $\tilde{\mathbf{m}}_{rr}$ and $\tilde{\mathbf{p}}_{rr}$ are functions on $\Sph^2$ that do not depend on $r$, and 
$\mathbf{g}_{r\alpha}$ and $\mathbf{p}_{r\alpha}$ are one-forms that do not depend on $r$. The indices on these one-forms are raised and lowered with respect to $\sigma_{\alpha\beta}$.
Moreover, ${}^{0}\mathbf{g}_{\alpha\beta}$, ${}^{0}\mathbf{p}_{\alpha\beta}$, ${}^{g}\mathbf{m}_{\alpha\beta}$, ${}^{k}\mathbf{m}_{\alpha\beta}$, ${}^{g}\tilde{\mathbf{m}}_{\alpha\beta}$ and ${}^{k}\tilde{\mathbf{m}}_{\alpha\beta}$ are  symmetric $(0,2)$-tensors on $\Sph^2$ which do not depend on $r$, whose indices are raised and lowered with $\sigma_{\alpha\beta}$. Moreover, we assume that  ${}^{0}\mathbf{g}_{\alpha\beta}$, ${}^{0}\mathbf{p}_{\alpha\beta}$, ${}^{g}\tilde{\mathbf{m}}_{\alpha\beta}$ and ${}^{k}\tilde{\mathbf{m}}_{\alpha\beta}$ are traceless with respect to $\sigma$. All terms in the expansion of $\dot{g}$ are in $C^{2}(\Sph^{2})$ and those in the expansion of $p$ are in $C^{1}(\Sph^{2})$.
\end{definition}

In the above definition, we distinguish between the terms in the expansion of $\dot{g}$ and $p$ that contribute to the charges, as will be shown in Theorem \ref{THM::Energy_and_Linearmomentum_AH}. In particular, the terms contributing to the energy and linear momentum are denoted with boldface $\mathbf{m}$.

We note that the energy and momentum densities $\rho$ and $J^{i}$ in the definition above decay at the rate $O(r^{-4})$, following from the constraint equations coupled with the asymptotics of $g$ and $K$.

We also note that the definition of asymptotically hyperboloidal initial data sets can be generalized by requiring the metric and extrinsic curvature to lie in appropriate function spaces, such as weighted Hölder or weighted Sobolev spaces, for e.g., see \cite{Sakovich2021}. For the purposes of this work, however, it is sufficient to consider $C^{k}$ spaces.

\begin{remark} The definition of asymptotically hyperboloidal initial data adopted in this paper, Definition \ref{Def::AHyperboloidal}, is based on the framework established by Chen, Wang, and Yau \cite{ChenWangYau}, but there are key differences. Notably, we impose the condition that the tensors ${}^{g}\tilde{\mathbf{m}}_{\alpha\beta}$ and ${}^{k}\tilde{\mathbf{m}}_{\alpha\beta}$ be traceless with respect to the 2-metric $\sigma$ as without it, the evolution of energy and linear momentum would not be well-defined. Moreover, these conditions resemble those placed on the angular components of the expansion of the metric in Bondi--Sachs (null) coordinates (see, e.g., \cite{Godazgar2019, Chruciel2004TheTM, BMS}). These tracelessness conditions are consistent with those required for the outgoing radiation condition \cite{RKSachs1961}. Furthermore, our assumptions are less restrictive since we do not impose %a specific form for the coefficients of the asymptotic expansion. 
 any condition on the regularity of $\mathscr{I}^+$ (see discussion in Remark \ref{Anderssoncomparison}).
While alternative definitions of asymptotically hyperboloidal initial data exist in the literature, such as the asymptotics introduced by Wang \cite{Wang}, which have been used in \cite{Sakovich2021} to prove a positive mass theorem on asymptotically hyperboloidal initial data using the Jang equation, it is important to note that Wang's asymptotics do not incorporate the case of radiation, which is the central focus of our study. \end{remark}

\begin{remark}
The expressions of energy and linear momentum that we derive in Theorem \ref{THM::Energy_and_Linearmomentum_AH} are still valid for asymptotically hyperboloidal initial data sets as defined by Chen, Wang, and Yau and coincide with the definitions in \cite{ChenWangYau}, where it is shown that these integrals arise as limits at spatial infinity of the quasi-local Liu-Yau mass under these asymptotic conditions (c.f. Theorem 3.2 in \cite{ChenWangYau}). However, it is not explicitly shown in \cite{ChenWangYau} that this fact is preserved under the evolution since the evolved data is no longer asymptotically hyperboloidal (c.f. discussion following Theorem \ref{thm:evolution_P}). 
Instead, the additional tracelessness conditions on ${}^{g}\tilde{\mathbf{m}}_{\alpha\beta}$ and ${}^{k}\tilde{\mathbf{m}}_{\alpha\beta}$ that we impose become essential in our approach to have well-definedness of the evolution.  
\end{remark}

\begin{example}\label{example:SchwarzschildAH}
    We consider the totally umbilic initial data set 
    $(\Hyp^3_S,g_S,g_S)$,
    where $\Hyp^3_S$ is the hyperboloid in Schwarzschild spacetime of mass $m>0$. The metric $g_S$ written in standard Schwarzschild coordinates $(r,\theta,\varphi)\in (r_0,+\infty)\times\Sph^2$ is \begin{equation}\label{eq::SchwarzschildAdSmetric}
        g_S:=-\left(1-\frac{2m}{r}+r^2\right)^{-1}dr^2+r^2d\sigma^2.
    \end{equation}
    We obtain that in these coordinates, the initial data $(\Hyp^3_S,g_S,g_S)$ is asymptotically hyperboloidal in the sense of Definition \ref{Def::AHyperboloidal}. In particular, as $r\rightarrow+\infty$, we have
    \begin{align*}
            &\dot{g}_{rr} = \dot{K}_{rr}= \frac{2m}{r^5}-\frac{4m}{r^7}+O(r^{-8}),\\
            &\dot{g}_{r\alpha}=\dot{K}_{r\alpha}=0, \ \ \ \dot{g}_{\alpha\beta}=\dot{K}_{\alpha\beta}=0.
    \end{align*}
\end{example}
    
\section{Energy and linear momentum on asymptotically hyperboloidal initial data}\label{sec3}
    Our definitions of energy and linear momentum are derived from a procedure introduced by Michel \cite{Michel}, where the author introduces a method for defining geometric invariants on Riemannian manifolds that are approaching some background manifold at infinity. Here, we apply Michel's result to asymptotically hyperboloidal initial data, with the constraint operator playing the role of a charge density. We believe that providing the details of the derivation of these quantities can be helpful for the reader as the construction in \cite{Michel} involves a high level of generality and is not immediately accessible at first reading. Additionally, this allows us to motivate their interpretation as energy and linear momentum of the initial data set from a physical perspective. 

 %   {Geometric data:}
    We start from an initial data set $(M, g, K)$ which is asymptotically hyperboloidal in the sense of Definition \ref{Def::AHyperboloidal}, with respect to a diffeomorphism at infinity $\Psi$. We define 
    \[
        \dot{g}:=\Psi^*g-b, \mbox{ and } \dot{K}:=\Psi^*K-b,
    \]
    where $b$ denotes the hyperboloidal metric as in \eqref{Eqn::StandardHyperboloid}. 
    
 %   {Charge density:}
    In Michel's approach \cite{Michel}, the charge density is described by a natural tensor-valued operator $\Phi$ that is invariant under diffeomorphisms. %In our case, 
    We choose to study the case where $\Phi$ %as 
    is the constraint operator
    \begin{align*}
        \Phi:\mathcal{M}\times_MS_2M&\rightarrow \Gamma(\Real\oplus T^*M)\\
        (g, K)&\mapsto \left(\begin{matrix}
            \Rscal+(\trace_gK)^2-|K|_g^2\\
            2\divergence_g(K-\trace_gK\cdot g)
        \end{matrix}\right)=:\left(\begin{matrix}
            \Phi^H(g,K)\\
            \Phi^M(g,K)
        \end{matrix}\right),
    \end{align*}
    where $\mathcal{M}$ is the bundle of Riemannian metrics on $M$ and $S_2M$ are the symmetric $(0,2)$-tensors on $M$. This is an admissible choice for the charge density operator, as the condition that $\Phi$ is invariant under diffeomorphisms translates to $\Phi^H = \mbox{const}$ and $\Phi^M = 0$, meaning the background data should be vacuum (or cosmological). This requirement is satisfied for asymptotically hyperboloidal initial data sets, which approach the vacuum initial data $(\Hyp^{3}, b, b)$. We denote by $\Phi_b=(0,0)$ the constraint operator evaluated at the background $(\Hyp^3, b, b)$, and by $D\Phi_b$ its linearization at $(\Hyp^{3}, b, b)$:
    \begin{equation}
        D\Phi_b(\dot{g},\dot{K})=
        \left(
            \begin{matrix}
                 \divergence_b\divergence_b\dot{g}+\Delta_b\trace_b\dot{g}-2\trace_b\dot{g}+4\trace_b\dot{K}
                  \\
                 d\trace_b\dot{g}-b^{ij}\nabla\dot{g}_{ij}-2(\divergence_b\dot{g}-d\trace_b\dot{g})+2\divergence_b\dot{K}-d\trace_b\dot{K})
          \end{matrix}
        \right).
    \end{equation}
        
 %   {Charge integrand:}
    Now, for fixed $(\dot{g},\dot{K})$, and after Taylor expansion, contracting with a test-function $\mathcal{V}\in\Gamma(\Real\oplus T^*\Hyp)$ leads to the following
    \[
        \langle \Phi(g,K)-\Phi_b , \mathcal{V} \rangle_b = 
        \langle D\Phi_b(\dot{g},\dot{K}) , \mathcal{V}\rangle_b + \mathcal{Q}(\mathcal{V},(\dot{g},\dot{K})),
    \]
    where $\mathcal{Q}(\mathcal{V},(\dot{g},\dot{K}))$ is the quadratic and higher order remainder of the expansion contracted with $\mathcal{V}$, and $\langle\cdot,\cdot\rangle_b$ denotes the inner product defined with respect to the background metric $b$. Integration by parts of the first term on the right-hand side of the above expression yields:
    \[
     \langle \Phi(g,K) , \mathcal{V} \rangle_b = 
        \divergence_b\mathbb{U}(\mathcal{V},(\dot{g},\dot{K}))+
        \langle D^*\Phi_b\mathcal{V},(\dot{g},\dot{K})\rangle_b + \mathcal{Q}(\mathcal{V},(\dot{g},\dot{K})),
    \]
    where $D\Phi_b^*$ is the formal adjoint of the linearisation of $\Phi$ at the background. The boundary term $\mathbb{U}(\mathcal{V},(\dot{g},\dot{K}))$ is a one-form of order reduced by one compared to $\Phi$ in each of its arguments, and it is defined in \cite{Michel} as the \emph{charge integrand}.
    
    This procedure leads to the construction of a geometric invariant on the initial data set $(M, g, K)$ defined with respect to a diffeomorphism at infinity $\Psi$, for each non-trivial $\mathcal{V}\in\Gamma(\Real\oplus T^*\Hyp)$, such that $D^*\Phi_b\mathcal{V}=0$ and the following integral converges
    \begin{equation}\label{Eq::MichelCharge}
        m(\dot{g},\dot{K},\mathcal{V}):=\lim_{k\rightarrow\infty}\frac{1}{16\pi}\oint_{\mathcal{S}_k}\mathbb{U}(\mathcal{V},(\dot{g},\dot{K}))(\nu)dS.
    \end{equation}
    Each $S_k=\partial B_k$ is smooth and compact, $(B_k)_{k\in\mathbb{N}}$ is a non-decreasing exhaustion of the background manifold, and $\nu$ and $dS$ are the outer unit normal and surface measure on $S_k$ with respect to the background metric. \\

    \begin{remark}
        The above integral is independent of the chosen exhaustion but depends on the coordinates at infinity $\Psi$. However, it has been shown \cite{Michel} that it is invariant under changes of coordinates at infinity which are asymptotic to the identity. 
    \end{remark}
 Choosing $\Phi$ as the constraint operator, we connect the geometric invariants defined above to their interpretation as physical quantities. In particular, one should notice that charges of type \eqref{Eq::MichelCharge} originate from elements of the kernel of the map $D^{*}\Phi_{b}$. These are, by definition, the Killing initial data (KIDs) of the background \cite{NewKIDs} -- in our case, of the hyperboloidal initial data $(\Hyp^{3}, b, b)$  -- and they are in a one-to-one correspondence with the Killing vector fields of the associated spacetime. This connection is well-known, see for example \cite{Moncrief, Berger}. The spacetime arising as the maximal globally hyperbolic development of the hyperboloidal initial data set is "a piece of" Minkowski. 
    
Noether's theorem connects the existence of Killing vector fields (symmetries) in spacetime to "conserved quantities" of the system. In the Minkowski spacetime, the conserved quantities associated with these symmetries are well understood within the theory of special relativity \cite{Christodoulou}. In particular, the quantity associated with the time-translation symmetry represents the energy of the system, while those corresponding to the spatial translation symmetries describe the linear momentum. Motivated by this, we define the energy and linear momentum on asymptotically hyperboloidal initial data sets, as the charges corresponding to the time and spatial translation Killing initial data on the background.

 %   {Charge:}
    \begin{definition}[Michel Charges]\label{Def::MichelChargeAH}
        Let $\idsmatter$ be an asymptotically hyperboloidal initial data set in the sense of Definition \ref{Def::AHyperboloidal}. 
        The \emph{energy} of the initial data set is given by  
         \[
        E:=m(\dot{g},\dot{K},\mathcal{V}_0)=\lim_{k\rightarrow\infty}\frac{1}{16\pi}\oint_{\mathcal{S}_k}\mathbb{U}(\mathcal{V}_0,(\dot{g},\dot{K}))(\nu)dS,
    \]
    where $\mathcal{V}_0$ is the KID on $(\Hyp^3,b,b)$ associated to the time translation symmetry in Minkowski.
    The {linear momentum} is the $3$-vector $P$, with components
        \[
        P^i:=m(\dot{g},\dot{K},\mathcal{V}_i)=\lim_{k\rightarrow\infty}\frac{1}{16\pi}\oint_{\mathcal{S}_k}\mathbb{U}(\mathcal{V}_i,(\dot{g},\dot{K}))(\nu)dS,
    \]
    where $\mathcal{V}_i, i=1,2,3$ are the KIDs on $(\Hyp^3,b,b)$ associated with the spatial translation symmetries in Minkowski.
    In the above integrals, $S_k=\partial B_k$ are smooth and compact, and $(B_k)_{k\in\mathbb{N}}$ is a non-decreasing exhaustion of the hyperboloid $\Hyp^3$, and $\nu$ and $dS$ are the outer unit normal and surface measure on $S_k$ with respect to the background hyperboloidal metric $b$.
\end{definition}

Using the definitions introduced %in Definition \ref{Def::MichelChargeAH}
above, we now proceed to compute the energy and linear momentum for a vacuum, asymptotically hyperboloidal initial data set. These quantities are independent of the chosen exhaustion of the background data and we derive them by evaluating the relevant surface integrals on spheres $\Sph^2_r$ on the hyperboloidal background, with \( r \to \infty \). The following theorem provides the explicit expressions for the energy and linear momentum of the initial data, as defined by the surface integrals in the previous section.
\begin{theorem}\label{THM::Energy_and_Linearmomentum_AH}
    Let $\idsmatter$ be an asymptotically hyperboloidal initial data set in the sense of Definition \ref{Def::AHyperboloidal}. The energy $E$ and linear momentum $P^{i}$, for $i=1,2,3$ of the initial data are well-defined and are given by the following formulae:
    \begin{equation}\label{Eqn::EnergyAH}
        E = \frac{1}{16\pi}\int_{\Sph^{2}} \left(2\mathbf{m}_{rr} + 3 \, \trace_{\sigma}\ \!^{g}\mathbf{m}+ 2\trace_{\sigma}\ \!^{k}\mathbf{m}\right) \: dA_{\Sph^{2}},
    \end{equation}
\begin{equation}\label{Eqn::MomentumAH}
    P^{i} = \frac{1}{16\pi}\int_{\Sph^{2}}\left( 2\mathbf{m}_{rr} + 3 \, \trace_{\sigma}\ \!^{g} \mathbf{m} + 2\trace_{\sigma}\ \!^{k} \mathbf{m} \right) x^{i} \, dA_{\Sph^{2}}, \quad i = 1,2,3.
\end{equation}
where $x^{i}$ denote the first spherical harmonics on the unit sphere $\Sph^2$. \end{theorem}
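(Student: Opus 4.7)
My strategy is to compute the Michel charge from Definition \ref{Def::MichelChargeAH} explicitly for the four translation KIDs on the hyperboloidal background, evaluating the surface flux in the limit $r\to\infty$ along the exhaustion by coordinate spheres $\Sph^{2}_{r}\subset\Hyp^{3}$. I would first identify the KIDs: the Minkowski Killing fields $\partial_t$ and $\partial_{x^i}$ restrict to $(\Hyp^3,b,b)$ as pairs $\mathcal{V}_0=(N_0,X_0)$ and $\mathcal{V}_i=(N_i,X_i)$ with lapses $N_0=\sqrt{1+r^2}$ and $N_i = x^i$ (the Euclidean coordinate functions restricted to $\Hyp^3$, which on $\Sph^{2}_{r}$ grow like $r$ times the first spherical harmonics), and with shifts given by the tangential parts of these Killing fields to $\Hyp^{3}$. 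I would then derive the explicit form of the charge integrand $\mathbb{U}(\mathcal{V},(\dot g,\dot K))$ by carrying out the integration by parts sketched before \eqref{Eq::MichelCharge} for $\Phi$ the constraint operator. Since $D\Phi_b$ is second-order in $\dot g$ and first-order in $\dot K$, $\mathbb{U}$ is a one-form on $\Hyp^3$ whose Hamiltonian part is linear in $(N,\nabla N,\dot g,\nabla\dot g)$ and whose momentum part is linear in $(X,\dot K,\dot g)$, with additional algebraic contributions reflecting the $-2\trace_b\dot g + 4\trace_b\dot K$ terms visible in $D\Phi_b$.

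With the KIDs and $\mathbb{U}$ in hand, the next step is to insert the asymptotic expansions \eqref{Eqn::AHyperboloidal(g,p)} and evaluate $\oint_{\Sph^{2}_{r}}\mathbb{U}(\nu)\,dS$ term by term. The surface measure on $\Sph^{2}_{r}$ with respect to $b$ scales as $r^{2}\,dA_{\Sph^{2}}$, and the inverse background metric has components $b^{rr}=1+r^2$ and $b^{\alpha\beta}=r^{-2}\sigma^{\alpha\beta}$, so a careful power count in $r$ selects the coefficients that survive the limit. These turn out to be precisely the boldface mass-aspect functions $\mathbf{m}_{rr}$, $\!^g\mathbf{m}_{\alpha\beta}$ and $\!^k\mathbf{m}_{\alpha\beta}$, which combine with the numerical coefficients $2$, $3$ and $2$ to produce \eqref{Eqn::EnergyAH} in the time-translation case; the momentum formula \eqref{Eqn::MomentumAH} follows from exactly the same calculation, with the factor $x^{i}$ brought in by the growing lapse $N_{i}$.

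The main obstacle is establishing convergence of the flux as $r\to\infty$. The integrand contains several \emph{a priori} divergent contributions from the $O(1)$ angular blocks $\!^0\mathbf g_{\alpha\beta}$, $\!^0\mathbf p_{\alpha\beta}$ and from the subleading blocks $\!^g\tilde{\mathbf m}_{\alpha\beta}$, $\!^k\tilde{\mathbf m}_{\alpha\beta}$; when contracted with $b^{\alpha\beta}=r^{-2}\sigma^{\alpha\beta}$ and paired with the $r^{2}$ surface measure and the growing lapse, these terms would naively produce divergent contributions to the flux. All such dangerous terms enter only through $\sigma$-traces of the angular blocks, and the tracelessness hypotheses imposed on $\!^0\mathbf g$, $\!^0\mathbf p$, $\!^g\tilde{\mathbf m}$ and $\!^k\tilde{\mathbf m}$ in Definition \ref{Def::AHyperboloidal} are precisely what is needed to make them vanish pointwise. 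Verifying this cancellation cleanly, together with the bookkeeping of powers of $r$ in the many first-derivative terms of $\mathbb{U}$ and the normalization of the outward unit normal $\nu$ with respect to $b$, is the technical heart of the argument.
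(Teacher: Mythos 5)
Your proposal is correct and follows essentially the same route as the paper: write out the Michel charge integrand for the translation KIDs (lapse $\sqrt{1+r^2}$ for energy, $\sim r\,x^i$ for momentum), insert the expansions \eqref{Eqn::AHyperboloidal(g,p)}, power-count against the $r^2$ surface measure and the inverse metric $b^{rr}=1+r^2$, $b^{\alpha\beta}=r^{-2}\sigma^{\alpha\beta}$, and kill the divergent $\sigma$-traces using the tracelessness hypotheses. One small correction: the subleading blocks ${}^{g}\tilde{\mathbf m}_{\alpha\beta}$ and ${}^{k}\tilde{\mathbf m}_{\alpha\beta}$ are not actually dangerous for the charge itself --- after pairing with the growing lapse and the $r^2$ measure they enter at order $r^{-1}$ and vanish in the limit, so only the tracelessness of ${}^{0}\mathbf g$ and ${}^{0}\mathbf p$ is used here; the tracelessness of the tilded blocks is needed only later, for the evolution theorems.
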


\begin{proof}
    Let $\mathcal{V}=(V, Y)=(V,\alpha^\# )$ be a KID of the background hyperboloidal data. Writing \eqref{Eq::MichelCharge} explicitly, we obtain that a well-defined charge associated to $\mathcal{V}$ exists whenever the following integral converges
    \begin{align*}
         m(\dot{g},\dot{K},\mathcal{V})=\lim_{r\rightarrow\infty}\frac{1}{16\pi}\oint 
            \big[ 
                V\big(&\divergence_b\dot{g}-d\trace_b\dot{g}\big)
                -\iota_{\nabla V}\dot{g} +(\trace_b\dot{g})dV
                +2\big(\iota_\alpha\dot{K}-(\trace_b\dot{K})\alpha)\big)\\
                &+(\trace_b\dot{g}) \iota_\alpha K_{\Hyp^3} 
                +\langle K_{\Hyp^3},\dot{g}\rangle_b\alpha-2\iota_\alpha(\dot{g}\circ K_{\Hyp^3}) \big](n)
                dA_{\Sph^2_r}.
    \end{align*}
    Here, $dA_{\Sph^2_r}$ is the area element on the sphere of radius $r$ in $(\Hyp^3,b)$, and $n$ is the outward unit normal to $\Sph^2_r$ within the hyperboloidal manifold. Moreover, $\divergence_b,\trace_b, \langle\cdot,\cdot\rangle_b$ are the divergence, trace, and inner product defined via the background hyperbolic metric $b$, and $K_{\Hyp^3}=b$ is the second fundamental form of the hyperboloid.
    
    Let $\{\partial_t,\partial_r,\partial_\theta, \partial_\varphi\}$ be the frame associated to the spacetime coordinates $\{t,r,\theta,\varphi\}$ in the Minkowski spacetime $\left(\Real^{3,1},\eta\right)$. We define a frame on $T\Hyp^3$, given by $\partial_r^{\mbox{\footnotesize{tang}}}, \partial_\theta, \partial_\varphi$, where $\partial_\theta, \partial_\varphi$ are the restrictions to the hyperboloid of the coordinate vector fields in Minkowski, and $\partial_r^{\mbox{\footnotesize{tang}}}$ is defined as a combination of ${\partial_t}\big|_{\Hyp^3}$ and ${\partial_r}\big|_{\Hyp^3}$ that is tangential to the hyperboloid and such that $dr(\partial_r^{\mbox{\footnotesize{tang}}})=1$. We obtain: 
    \[
        \partial_r^{\mbox{\footnotesize{tang}}}=\frac{-r^2}{\sqrt{1+r^2}}{\partial_t}\big|_{\Hyp^3}+{\partial_r}\big|_{\Hyp^3}.
    \]
    Then, substituting
    \[
        n=\frac{\partial_r^{\mbox{\footnotesize{tang}}}}{| \partial_r^{\mbox{\footnotesize{tang}}}|_{\eta}}=-r{\partial_t}\big|_{\Hyp^3}+\sqrt{1+r^2}{\partial_r}\big|_{\Hyp^3},
    \]
    we can rewrite the charge integrand as follows
    \begin{align*}
        \big[V\big(\divergence_b\dot{g}-d\trace_b\dot{g}\big)(\partial_r^{\mbox{\footnotesize{tang}}}) 
        -&(\trace_b\dot{g})b(\nabla V,\partial_r^{\mbox{\footnotesize{tang}}})
        +2(\trace_b\dot{K})b(\nabla V,\partial_r^{\mbox{\footnotesize{tang}}}) \\
        &+\dot{g}(\nabla V,\partial_r^{\mbox{\footnotesize{tang}}}) 
        -2\dot{K}(\nabla V,\partial_r^{\mbox{\footnotesize{tang}}})\big] \sqrt{1+r^2},
    \end{align*}
    where we used the fact that for KIDs on the hyperboloid, it holds $Y=-\nabla V$ if $V\not\equiv0$. 
    For ease of readability, we break down the above expression into manageable parts and compute them separately. We report here the details of the computation for the energy; the linear momentum components can be obtained following the same computation, using the corresponding KIDs.\\
    \ \\
    \emph{Energy:} Let $\dot{g},\dot{K}=\dot{g}+p$ be as in Definition \ref{Def::AHyperboloidal}, and let $\mathcal{V}_0$ be the KID on $(\Hyp^3,b,b)$ associated to the time translation symmetry in Minkowski,
    \begin{equation*}
       \begin{matrix}
           \mathcal{V}_0=\{ V=\sqrt{1+r^2},
        & Y=-r\sqrt{1+r^2}\partial_r^{\mbox{\footnotesize{tang}}} \}.
        \ & \
       \end{matrix} 
    \end{equation*}
     We compute the leading order of the terms appearing in the charge integrand, for the given asymptotic expansions of $\dot{g}$ and $\dot{K}$, 
    \begin{gather*}
    \divergence_b\dot{g}(\partial_r^{\mbox{\footnotesize{tang}}}) =
    (1+r^2)\left[\frac{\partial}{\partial r}\dot{g}_{rr}+2\left(\frac{r}{1+r^2}+\frac{1}{r}\right)\dot{g}_{rr}\right]\sim O(r^{-4}), \\
    d(\trace_b\dot{g})(\partial_r^{\mbox{\footnotesize{tang}}}) = 
    (1+r^2)\frac{\partial}{\partial r}\dot{g}_{rr}+ 2r\dot{g}_{rr}
    -\frac{2}{r^3}\sigma^{\alpha\beta}\dot{g}_{\alpha\beta}
    -\frac{1}{r^3}\sigma^{\alpha\beta}\frac{\partial}{\partial r}\dot{g}_{\alpha\beta}
    \sim O(r^{-4}),
\end{gather*}
and find that the contribution of the first term of the integral to the charge is
    \[ 
        V\big(\divergence_b\dot{g}-d\trace_b\dot{g}\big)(\partial_r^{\mbox{\footnotesize{tang}}})=V\left[\frac{1+r^2}{r}\dot{g}_{rr}+\frac{2}{r^3}\sigma^{\alpha\beta}\dot{g}_{\alpha\beta} \right].
    \]
    Computing the remaining terms:
    \begin{gather*}
    -(\trace_b\dot{g})b(\nabla V,\partial_r^{\mbox{\footnotesize{tang}}})
    +2(\trace_b\dot{K})b(\nabla V,\partial_r^{\mbox{\footnotesize{tang}}}) = 
    \frac{r}{\sqrt{1+r^2}}\trace_b(\dot{g}+2p), \\
    \dot{g}(\nabla V,\partial_r^{\mbox{\footnotesize{tang}}}) 
    -2\dot{K}(\nabla V,\partial_r^{\mbox{\footnotesize{tang}}})= r\sqrt{1+r^2}\left[-\dot{g}_{rr}-2p_{rr}\right],
\end{gather*}
 and combining all elements, we obtain a simplified formula for the energy of asymptotically hyperboloidal initial data
    \begin{equation}
        E=\frac{1}{16\pi}\lim_{r\rightarrow\infty} \int_{\Sph^{2}_{r}} \left[\frac{2V^4}{r}\dot{g}_{rr}+\frac{3r^2+2}{r^3}\trace_{\sigma}(\dot{g}_{\alpha\beta})+\frac{2}{r}\trace_{\sigma}(p_{\alpha\beta}) \right]dA_{\Sph^2_r}.
    \end{equation}
    
    We conclude that the energy of an asymptotically hyperboloidal initial data set in the sense of Definition \ref{Def::AHyperboloidal} is well defined as the charge \'a la Michel corresponding to the time translation symmetry initial data on $(\Hyp^3,b,b)$.
    In particular, using the fact that the $O(1)$-elements in the expansion of $\dot{g}_{\alpha\beta}$ and of $p_{\alpha\beta}$ are traceless with respect to the metric on $\Sph^2$, we obtain the expression for energy \begin{equation*}
        E = \frac{1}{16\pi}\int_{\Sph^{2}} \left(2\mathbf{m}_{rr} + 3 \, \trace_{\sigma}\ \!^{g}\mathbf{m}+ 2\trace_{\sigma}\ \!^{k}\mathbf{m}\right) \: dA_{\Sph^{2}}.
    \end{equation*}
 %   \begin{equation}\label{eq::AHenergy_final}
 %       E=\oint \left[2m_{rr}+3\trace_{\Sph^2}\left(\!^gm_{\alpha\beta}\right)
                %+2\trace_{\Sph^2}\left(\!^km_{\alpha\beta}\right)\right] dA_{\Sph^2}
  %  \end{equation}
    \\
    \emph{Linear momentum:} %\textcolor{red}{linear momentum needs to be written}
    The KIDs on $(\Hyp^3,b,b)$ corresponding to the spatial translation symmetries in Minkowski are
    \begin{equation*}
        \begin{matrix}
            \mathcal{V}_1= \{V_1= -r\sin\theta\cos\varphi, & Y_1= -\grad_bV_1\},
            \\ 
            \mathcal{V}_2=\{V_2= -r\sin\theta\sin\varphi, &Y_2=-\grad_bV_2 \},
            \\
            \mathcal{V}_3=\{V_3= -r\cos\theta,\ \ \ \ \ \ \  & Y_3= -\grad_bV_3\}.
        \end{matrix}
    \end{equation*}
    Substituting $\mathcal{V}_i$ in \eqref{Eq::MichelCharge} and following the same computation as for energy, we obtain 
    \begin{equation*}
         P^{i} = \frac{1}{16\pi}\int_{\Sph^{2}}\left( 2\mathbf{m}_{rr} + 3 \, \trace_{\sigma}^{g} \mathbf{m} + 2\trace_{\sigma}^{k} \mathbf{m} \right) x^{i} \, dA_{\Sph^{2}}, \end{equation*}
    where $x^{i},\ i=1,2,3$ denote the first spherical harmonics on the unit sphere $\Sph^2$. 
    \end{proof}
    
    \begin{corollary}
        In the following, we will use the following simplified formula for the energy of an asymptotically hyperboloidal initial data set, that follows from the computations done in the proof of Theorem \ref{THM::Energy_and_Linearmomentum_AH}:
        \begin{equation}\label{eq::simplified_energyAH}
             E= \frac{1}{16\pi}\lim_{r\rightarrow\infty} \int_{\Sph^{2}_{r}} V^2\big(\divergence_{b} \:\dot{g}-d \: \trace_{b}\dot{g}\big) -V\trace_{\sigma}(\dot{g}_{\alpha\beta})+ 2V\trace_{\sigma}(\dot{K}_{\alpha\beta}) \:dA_{\Sph^2_r},
        \end{equation}
        and
        \begin{equation}\label{eq::simplified_momentumAH}
            P^{i}= \frac{1}{16\pi}\lim_{r\rightarrow\infty} \int_{\Sph^{2}_{r}} \left(V^2\big(\divergence_{b} \:\dot{g}-d \: \trace_{b}\dot{g}\big) -V\trace_{\sigma}(\dot{g}_{\alpha\beta})+ 2V\trace_{\sigma}(\dot{K}_{\alpha\beta})\right) \: x^{i} \:dA_{\Sph^2_r}.
        \end{equation}
    \end{corollary}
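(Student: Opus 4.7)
The plan is to recognize \eqref{eq::simplified_energyAH} and \eqref{eq::simplified_momentumAH} as a regrouping of intermediate expressions already obtained in the proof of Theorem~\ref{THM::Energy_and_Linearmomentum_AH}. I would trace that proof back to the stage at which the charge integrand has been written as a sum of five terms contracted with $\partial_r^{\mathrm{tang}}$ and multiplied by the normalisation factor $\sqrt{1+r^{2}}$, and then reorganise those terms to exhibit the compact form stated in the corollary, rather than continuing to expand in $\dot{g}_{rr}$, $\trace_{\sigma}\dot{g}_{\alpha\beta}$ and their $\dot{K}$-analogues.

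Concretely, for the energy KID $\mathcal{V}_{0}=(V,-\nabla V)$ with $V=\sqrt{1+r^{2}}$, the first step is to absorb the factor $\sqrt{1+r^{2}}=V$ coming from the unit normalisation of the outward normal into the prefactor of the leading term, producing $V^{2}\bigl(\divergence_{b}\dot{g}-d\trace_{b}\dot{g}\bigr)$ and thereby recovering the first summand of \eqref{eq::simplified_energyAH}. The second step is the combination of the four remaining terms. I would use the splittings $\trace_{b}\dot{g}=V^{2}\dot{g}_{rr}+\tfrac{1}{r^{2}}\trace_{\sigma}\dot{g}_{\alpha\beta}$ and $\trace_{b}\dot{K}=V^{2}\dot{K}_{rr}+\tfrac{1}{r^{2}}\trace_{\sigma}\dot{K}_{\alpha\beta}$, together with the identities $b(\nabla V,\partial_r^{\mathrm{tang}})=\partial_{r}V=r/V$ and $\nabla V=rV\,\partial_r^{\mathrm{tang}}$ that were derived in the proof of Theorem~\ref{THM::Energy_and_Linearmomentum_AH}. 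The key observation is that the purely radial contributions $rV^{2}\dot{g}_{rr}$ and $rV^{2}\dot{K}_{rr}$ arising from $(\trace_{b}\dot{g})\,dV$ and $(\trace_{b}\dot{K})\,dV$ cancel exactly against the $\dot{g}(\nabla V,\cdot)$ and $\dot{K}(\nabla V,\cdot)$ pieces, leaving only the angular traces, which are precisely the last two summands of \eqref{eq::simplified_energyAH}.

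The linear momentum identity \eqref{eq::simplified_momentumAH} then follows by running the same computation with the spatial translation KIDs $\mathcal{V}_{i}=(V_{i},-\nabla V_{i})$ in place of $\mathcal{V}_{0}$. Since the algebraic structure of the charge integrand is identical, the same cancellation pattern occurs, and the lapse profile $V_{i}=-r\,x^{i}$ (with $x^{i}$ the restrictions of the Cartesian coordinates to $\Sph^{2}$) is carried through as a spherical-harmonic weight multiplying each of the three surviving terms, yielding the stated expression. No genuine analytic obstacle arises; the argument is a bookkeeping exercise on expressions already present in the proof of Theorem~\ref{THM::Energy_and_Linearmomentum_AH}. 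The only points requiring care are matching the convention for evaluating the relevant $1$-form on $\partial_r^{\mathrm{tang}}$ versus on the outward unit normal, and computing $\nabla V_{i}$ explicitly for the three translation KIDs; the asymptotic assumptions of Definition~\ref{Def::AHyperboloidal} intervene only to guarantee that all subleading terms vanish in the $r\to\infty$ limit.
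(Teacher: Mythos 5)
Your proposal is correct and follows exactly the route the paper intends: the corollary is obtained by evaluating the five-term charge integrand on $n=\sqrt{1+r^{2}}\,\partial_r^{\mbox{\footnotesize{tang}}}$, absorbing that factor of $V$ into the prefactor of the divergence term to produce $V^{2}\bigl(\divergence_{b}\dot{g}-d\trace_{b}\dot{g}\bigr)$, and using the exact cancellation of the radial pieces $rV\dot{g}_{rr}$ and $rV\dot{K}_{rr}$ between $(\trace_{b}\dot{g})\,dV$, $(\trace_{b}\dot{K})\,dV$ and $\iota_{\nabla V}\dot{g}$, $\iota_{\nabla V}\dot{K}$, leaving only the angular traces. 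This is precisely the regrouping of the intermediate expressions displayed in the proof of Theorem \ref{THM::Energy_and_Linearmomentum_AH}, so no further comment is needed.
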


    \begin{example}[Schwarzschild]
        As seen in Example \ref{example:SchwarzschildAH}, we denote by $(\Hyp^3_{S},g_S,g_s)$ the hyperboloid in Schwarzschild spacetime of mass $m>0$. In standard Schwarzschildean coordinates,  $(\Hyp^3_{S},g_S,g_s)$ is asymptotically hyperboloidal in the sense of Definition \ref{Def::AHyperboloidal}. We can then compute the energy of such initial data using Theorem \ref{THM::Energy_and_Linearmomentum_AH} and obtain
        \begin{align*}
            E=&\frac{1}{16\pi}\int_{\Sph^{2}} 2\mathbf{m}_{rr} + 3 \, \trace_{\sigma}\ \!^{g}\mathbf{m}+ 2\trace_{\sigma}\ \!^{k}\mathbf{m} \: dA_{\Sph^{2}}\\
            =& \frac{1}{16\pi}\int_{\Sph^{2}} 4m\ dA_{\Sph^{2}} = m.
        \end{align*}
    \end{example}

\section{Choice of Evolution}\label{sec4}
The Einstein evolution equations, as written in \eqref{eq::Vacuum-EvoEqns}, describe the evolution of the metric and second fundamental form in the normal direction to the initial data set in consideration. This direction for the evolution is admissible, in principle, as the unit normal vector field is everywhere timelike and leads to a definition of a (local) time function. The goal of this section is to clarify why the choice of a time function describing the evolution must be made carefully, to ensure that the foliation it produces satisfies specific properties, which are essential for proving the theorems in Section \ref{sec5}. The key requirement is that the foliation of level sets of the time function at infinity is hyperboloidal. In the Penrose picture, this produces a natural foliation of $\mathscr{I}^{+}$, which is a necessary condition to enable the measurement of energy loss due to radiation. This fact is well known: for a clear and complete review of this, see e.g. \cite{Anil}. 
Here, we review the procedure of finding a suitable $\tau$ which produces a (local) hyperboloidal foliation in exact Minkowski spacetime as done in \cite{Anil}.

We start from the standard Minkowski polar coordinates $(t,r,\theta,\phi)$. Following \cite{Beig-Murchadha}, we consider \begin{equation} 
    \tau := t - h(r).
\end{equation} 

The \emph{height function} $h(r)$ needs to be chosen appropriately to ensure that $\nabla\tau$ is timelike near infinity. Consider the $(3+1)$-decomposition of Minkowski spacetime $(\mathbb{R}^{3,1}, \eta)$ in terms of the temporal function $\tau$:
\begin{equation}\label{metric_tau}
ds^{2} = -d\tau^{2} - 2H(r) d\tau dr + (1 - H(r)^{2}) dr^{2} + r^{2} d\sigma^{2},
\end{equation}
where $H(r) := h'(r)$. Requiring that $\{\tau = \text{constant}\}$ hypersurfaces are radius 1-hyperboloids imposes the condition
\begin{equation}\label{choice_H}
H(r) = 1 - \frac{1}{2r^{2}} + O_2(r^{-\lambda}),
\end{equation}
with $\lambda\geq3.$ 
For $H(r)$ as in \eqref{choice_H}, the time function takes the form $\tau = t - r - \frac{1}{r} + O_3\left(r^{-2}\right)$. Thus, asymptotically, $\tau$ approaches the outgoing null coordinate $u := t - r$, leading to a foliation of $\mathscr{I}^{+}$.
\begin{figure}
    \centering
    \includegraphics[width=0.4\linewidth]{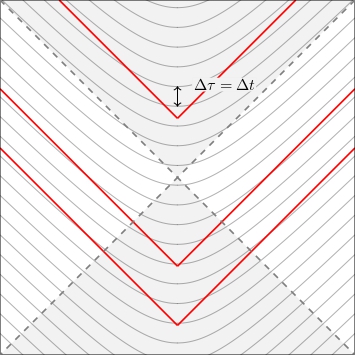} \includegraphics[width=0.4\linewidth]{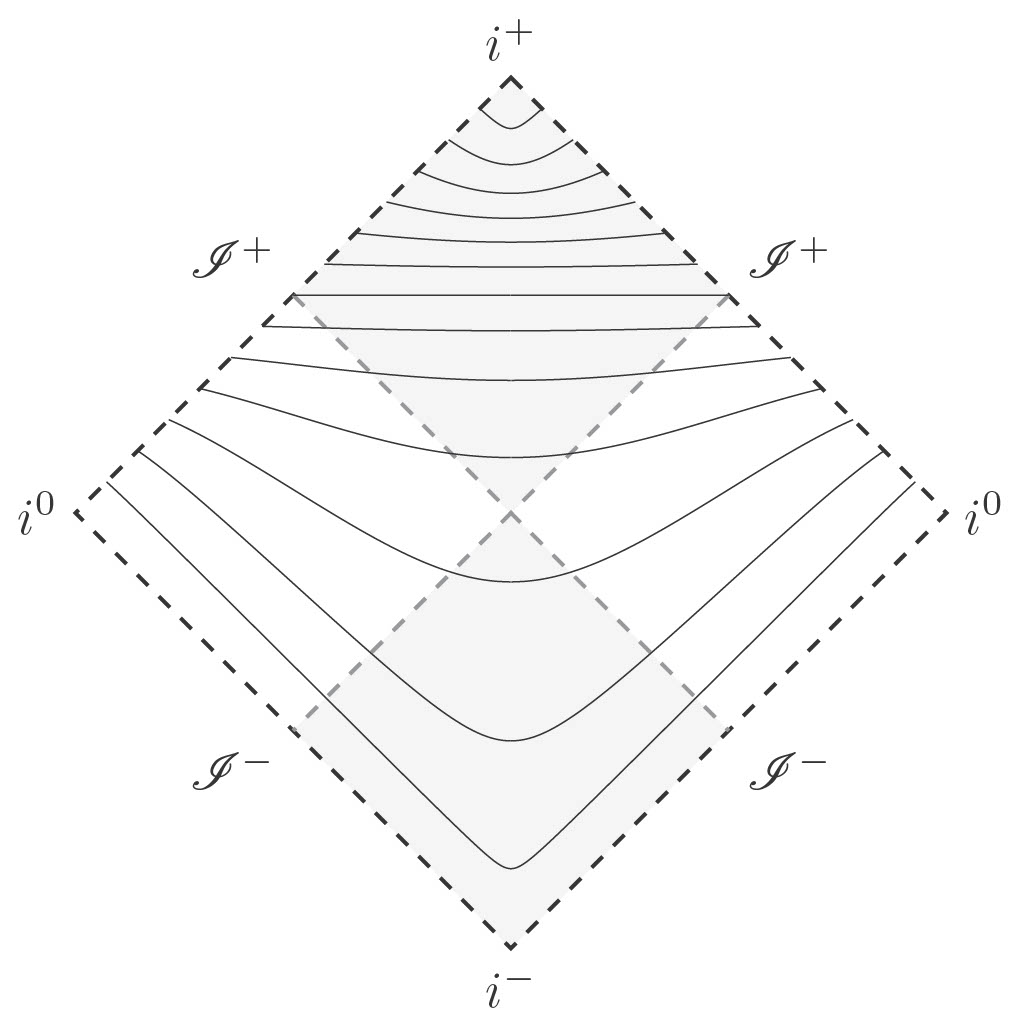}
    \caption{Foliation of Minkowski spacetime obtained by evolving the standard hyperboloid $(\Hyp^3,b,b)$ with the standard Killing time $\partial_\tau=\partial_t$. The red lines represent radiation, hitting only some of the future hyperboloids, as can be seen in the first picture. The second picture shows the foliation of $\mathscr{I}^+$ produced by the hyperboloidal foliation. 
    There are many functions $h(r)$ that satisfy \eqref{choice_H}, and we refer the interested reader to \cite{AnilZenginolu2025} for some explicit examples of height functions. See \cite{AnilDiagrams} for a guide to drawing Penrose diagrams.}
    \label{fig:enter-label}
\end{figure}

\begin{remark}\label{rmk:standardhyp}
    The standard hyperboloidal initial data $\left(\Hyp^{3},b,b\right)$ is obtained by considering the generalized hyperboloidal initial data with 
    $$h(r) = \sqrt{1+r^{2}},$$ and the lapse and shift on this initial data corresponding to the standard Killing time in Minkowski are
\begin{equation}\label{Eqn::standardLapseandShift}
    N=\sqrt{1+r^2}\ , \qquad X^r= -r\sqrt{1+r^{2}}.
\end{equation}
\end{remark}

It can be easily checked that the standard hyperboloidal data is invariant under the evolution in the Killing time direction. In particular,
\begin{equation*}
    \mathcal{L}_{\partial_\tau}{b} = \Lie_{N\nu+X} =0,
\end{equation*}
where $\nu$ is the unit normal to $(\Hyp^3,{b},b)$ in Minkowski, $N$ and $X$ are computed as in \eqref{Eqn::standardLapseandShift}, see Appendix \ref{App:evolution}. This is not coincidental; rather, it results from the choice of evolution being dictated by the exact Killing initial data (KID) of the initial data set. This property is crucial for establishing a well-defined notion of charges along the foliation defined by $\tau$, but it is not sufficient. In addition, we need the charge integrals to converge on each leaf of the foliation. If the evolution of the initial data $\idsmatter$ satisfies this last condition, we can measure the evolution of energy and linear momentum along the foliation. This is dealt with in detail in the next section, see the discussion following Definition \ref{Def:E-P_chargeability}.

\section{Evolution of Michel Charges on Asymptotically Hyperboloidal Foliations}\label{sec5}

In this section, we examine the evolution of energy and linear momentum for asymptotically hyperboloidal initial data sets which are vacuum outside a compact set. As previously discussed, we employ an evolution that preserves the background structure. The following theorems describe how the energy and linear momentum evolve along the foliation introduced earlier. This evolution is linked to the outgoing gravitational radiation measured on each foliation slice, resulting in a monotonically decreasing energy and linear momentum. We first present the evolution of the energy of asymptotically hyperboloidal initial data sets.

\begin{theorem}\label{thm:evolutionE}
Let $\idsmatter$ be an asymptotically hyperboloidal initial data set with respect to a diffeomorphism  
$
\Psi: \Hyp^{3} \setminus B_0 \to M \setminus B
$, where $B\subset M, B_0\subset\Hyp^3$ are compact sets, and assume that $\supp(\rho), \supp(J)\subset B$.
Let \(\tau\) be the standard time function defining a foliation of standard hyperboloids in Minkowski, and let $N$ and $X$ be the corresponding lapse and shift \eqref{Eqn::standardLapseandShift}.
We denote the foliation of the globally hyperbolic development of \((M, g, K)=\{\tau=0\}\), by \((M \times \{\tau\}, g(\tau), K(\tau))\), with \(\tau \in (-\varepsilon, \varepsilon)\), for some $\varepsilon >0$.

Then, the energy \(E(\tau)\) is well-defined on each leaf of the foliation \(\left(M \times \{\tau\},g(\tau),K(\tau)\right)\) and satisfies  
\begin{equation}
    \frac{\partial}{\partial\tau}E(\tau) 
\;=\;
-\frac{1}{4\pi} 
\int_{\Sph^{2}} 
\bigl|{}^{0}\mathbf{g} + {}^{0}\mathbf{p}\bigr|_{\sigma}^{2}
\,dA_{\Sph^{2}},
\end{equation}
where $|{}^{0}\mathbf{g} + {}^{0}\mathbf{p}|^{2}_{\sigma} = \left({}^{0}\mathbf{g}^{\alpha\beta} + {}^{0}\mathbf{p}^{\alpha\beta}\right)\left({}^{0}\mathbf{g}_{\alpha\beta} + {}^{0}\mathbf{p}_{\alpha\beta}\right)$.
\end{theorem}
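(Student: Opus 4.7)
The plan is to differentiate the simplified energy integral \eqref{eq::simplified_energyAH} with respect to $\tau$ under the limit and the $\Sph^{2}_r$ integral, substitute the Einstein evolution equations written for the chosen lapse $N=\sqrt{1+r^2}$ and shift $X^r=-r\sqrt{1+r^2}$, and isolate the $O(1)$ contribution as $r\to\infty$. Because $(\rho,J)$ is supported in a compact set, propagation of the constraints guarantees that the evolved data is vacuum outside a (possibly slightly enlarged) compact set for $|\tau|<\varepsilon$, so in the asymptotic region we may use the vacuum form \eqref{eq::Vacuum-EvoEqns}, adapted to a non-normal direction via $\mathcal{L}_{\partial_\tau}=N\mathcal{L}_\nu+\mathcal{L}_X$.

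The first step is a bookkeeping step: substitute the asymptotic expansions \eqref{Eqn::AHyperboloidal(g,p)} into
\[
\partial_\tau g_{ij} \;=\; 2N K_{ij} + (\mathcal{L}_X g)_{ij},
\qquad
\partial_\tau K_{ij} \;=\; -\Hess N_{ij} + N\bigl[\Ric_{ij}+(\trace_g K)K_{ij}-2K_{ik}K^{k}{}_{j}\bigr] + (\mathcal{L}_X K)_{ij},
\]
and track the effect of the enhanced-growth factors $N\sim r$ and $X\sim r^{2}\partial_r$ order by order in $r^{-1}$. This yields the short-time expansion of $\dot g(\tau),\dot K(\tau)$ and shows that although the evolved data need not satisfy Definition \ref{Def::AHyperboloidal} verbatim (the tracelessness of the $O(1)$ angular parts is in general not preserved), the combination of coefficients appearing in \eqref{eq::simplified_energyAH} still decays fast enough to make the surface integral converge as $r\to\infty$. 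This is the E-P chargeability preservation statement in disguise and is what justifies exchanging $\partial_\tau$ with the limit and the sphere integral.

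Having justified the exchange, I would write
\[
\frac{\partial E}{\partial\tau}
\;=\;
\frac{1}{16\pi}\lim_{r\to\infty}\int_{\Sph^{2}_r}
\Bigl[V^{2}\bigl(\divergence_b\partial_\tau\dot g - d\trace_b\partial_\tau\dot g\bigr)
-V\,\trace_\sigma\partial_\tau\dot g_{\alpha\beta}
+2V\,\trace_\sigma\partial_\tau\dot K_{\alpha\beta}\Bigr]\,dA_{\Sph^{2}_r},
\]
and substitute the expressions for $\partial_\tau \dot g$ and $\partial_\tau\dot K$ from the previous step. The radial-radial and radial-angular components of the evolution equations contribute at order $r^{-4}$ or faster, so when multiplied by $V^{2}\sim r^{2}$ they either vanish in the limit or integrate to zero against the spherical measure via orthogonality of spherical harmonics and the tracelessness of ${}^{0}\mathbf{g}$, ${}^{0}\mathbf{p}$. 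The surviving $O(1)$ contribution comes from the angular block $\partial_\tau g_{\alpha\beta}\big|_{O(1)}$ and $\partial_\tau K_{\alpha\beta}\big|_{O(1)}$, which, using $2N K_{\alpha\beta}$ and the Lie derivative along $X=-r\sqrt{1+r^{2}}\partial_r$, produces a quadratic expression in $({}^{0}\mathbf{g}+{}^{0}\mathbf{p})_{\alpha\beta}$ after the traces with $\sigma$ are taken.

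The main obstacle I anticipate is precisely this last step: demonstrating that all non-radiative terms, in particular the mass-aspect contributions from $\mathbf{m}_{rr}$, ${}^{g}\mathbf{m}_{\alpha\beta}$, ${}^{k}\mathbf{m}_{\alpha\beta}$ and their tildes, cancel in the final integrand so that only the manifestly non-negative quantity $|{}^{0}\mathbf{g}+{}^{0}\mathbf{p}|^{2}_{\sigma}$ survives. The expected mechanism is a combination of three ingredients applied at each relevant order in $1/r$: (i) the asymptotic vacuum Hamiltonian and momentum constraints, which algebraically relate the various coefficients in the expansions; (ii) integration by parts on $\Sph^{2}$, which turns angular divergences of lower-order tensors into vanishing boundary integrals; and (iii) the tracelessness hypotheses imposed in Definition \ref{Def::AHyperboloidal} on ${}^{0}\mathbf{g}$, ${}^{0}\mathbf{p}$, ${}^{g}\tilde{\mathbf{m}}$, ${}^{k}\tilde{\mathbf{m}}$, which kill cross-terms that would otherwise obstruct the identification. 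Once these cancellations are in place the coefficient $-1/(4\pi)$ and the negative sign follow by collecting $1/(16\pi)$ with the numerical factors generated by $2N=2\sqrt{1+r^{2}}\sim 2r$ acting on the angular block.
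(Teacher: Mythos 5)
Your overall strategy is the one the paper follows: differentiate the simplified charge \eqref{eq::simplified_energyAH} under the limit, substitute $\partial_\tau = N\mathcal{L}_\nu + \mathcal{L}_X$ with the standard lapse and shift, expand order by order in $1/r$, and show that everything except the quadratic radiation term either cancels or integrates to zero. However, the decisive content of the theorem is exactly the step you defer as an ``anticipated obstacle'': the leading contributions of $\divergence_b(2NK+\mathcal{L}_Xg)$ and $d\trace_b(2NK+\mathcal{L}_Xg)$ are $O(r^{-3})$, which against $V^2\,dA_{\Sph^2_r}\sim r^4\,dA_{\Sph^2}$ would make $\partial_\tau E$ \emph{divergent}, not merely contaminated by mass-aspect terms; the proof consists of computing all seven blocks ($\divergence_b$, $d\trace_b$, $\trace_\sigma(2NK+\mathcal{L}_Xg)$, $\Hess N$, $N\Ric$, $2N(K\circ K)$, $NHK$, $\mathcal{L}_XK$) via the Christoffel expansions of Appendix \ref{appB} and verifying that these dangerous orders cancel identically. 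Asserting that an ``expected mechanism'' will produce the cancellation is not a proof of well-definedness of $E(\tau)$, which is part of the claim.

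Two of your three proposed mechanisms also do not match what actually makes the argument close. The paper never invokes the asymptotic Hamiltonian or momentum constraints to relate expansion coefficients (your ingredient (i)); the cancellations are purely algebraic identities among the evolution-equation terms. And the tracelessness that kills the surviving non-radiative traces in the final integrand is that of the \emph{subleading} tensors ${}^{g}\tilde{\mathbf{m}}_{\alpha\beta}$ and ${}^{k}\tilde{\mathbf{m}}_{\alpha\beta}$ (this is precisely why Definition \ref{Def::AHyperboloidal} imposes it), not primarily that of ${}^{0}\mathbf{g}$, ${}^{0}\mathbf{p}$ as you emphasize; the one-form terms $\divergence_\sigma\mathbf{g}_{r\alpha}$, $\divergence_\sigma\mathbf{p}_{r\alpha}$ are handled by the divergence theorem, as you say. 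Finally, your evolution equation for $K$ carries the opposite overall sign on the $\Hess N$, $N\Ric$, $N(K\circ K)$ and $NHK$ terms relative to \eqref{eq::Vacuum-EvoEqns}; since the sign of the $2N(K\circ K)$ block is what produces the negative-definite flux $-\tfrac{2}{r^3}\lvert{}^{0}\mathbf{g}+{}^{0}\mathbf{p}\rvert^2_\sigma$, this convention must be fixed before the computation is run, or the energy will appear to increase.
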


\begin{proof}

   We start by recalling that the energy is well defined on the $\tau=0$ slice using Definition \ref{Def::AHyperboloidal}, as seen in Theorem \ref{THM::Energy_and_Linearmomentum_AH}. The Einstein evolution equations in vacuum are given by
    \begin{equation}
        \frac{\partial}{\partial\tau}g = 2NK + \mathcal{L}_{X}g,
    \end{equation}
    and
    \begin{equation}
        \frac{\partial}{\partial\tau}K = \Hess N - N\Ric + 2N(K \circ K) - NHK + \mathcal{L}_{X}K,
    \end{equation}
where $(K \circ K) = K_{i}^{k}K_{kl}$ and $H = \trace_gK$. Using the above in the expression for energy \eqref{eq::simplified_energyAH}, we obtain
\begin{equation}
    \begin{aligned}\label{evolvedcharge}
        \frac{\partial}{\partial \tau} E(\tau) 
        &= \frac{1}{16\pi} \lim_{r \rightarrow \infty}  
        \int_{\mathbb{S}^{2}_{r}} \Bigg[  
            V^{2} 
            \divergence_{b} (2NK + \mathcal{L}_{X} g)(\partial_{r})  \\
        &\quad - V^{2} d \trace_{b} (2NK + \mathcal{L}_{X} g)(\partial_{r})  
            - V \trace_{b} (2NK_{\alpha\beta} + \mathcal{L}_{X} g_{\alpha\beta})  \\
        &\quad + 2 V
            \trace_{b} \Big( \Hess_{\alpha\beta} N - N \Ric_{\alpha\beta}  
            + 2N (K \circ K)_{\alpha\beta} - N H K_{\alpha\beta}  \Big)  \\
        &\quad + 2 V 
            \trace_{b} \Big( \mathcal{L}_{X} K_{\alpha\beta} \Big)  
        \Bigg] \, dA_{\mathbb{S}^{2}_{r}},
    \end{aligned}
\end{equation}
where we have used that $\dot{g} = g - b$ and $\dot{K} = K - b$, and the fact that the background data is invariant under evolution. First, we compute the Lie derivatives of the components of the correction $\dot{g}$ with the shift $X$:
\begin{equation}
    \begin{aligned}
        \mathcal{L}_{X}\dot{g}_{rr} 
        &=-\mathcal{L}_{r^{2}\partial_{r}}\dot{g}_{rr} + O_{1}(r^{-6}) \\
        &= - r^{2}\partial_{r}\dot{g}_{rr} - 2\partial_{r}(r^{2})\dot{g}_{rr} + O_{1}(r^{-6}) \\
        &= \frac{\mathbf{m}_{rr}}{r^{4}} + 2\frac{\tilde{\mathbf{m}}_{rr}}{r^{5}} + O_{1}\left(r^{-6}\right),
    \end{aligned}
\end{equation}
similarly
\begin{equation}
\begin{aligned}
        \mathcal{L}_{X}\dot{g}_{r\alpha} &= -r^{2}\partial_{r}\dot{g}_{r\alpha} - \partial_{r}(r^{2})\dot{g}_{r\alpha} - \partial_{\alpha}(r^{2})\dot{g}_{rr} + O_{1}\left(r^{-4}\right)
        \\&= \frac{\mathbf{m}_{r\alpha}}{r^{2}} + O_{1}\left(r^{-3}\right),
\end{aligned}
\end{equation}
and 
\begin{equation}
\begin{aligned}
        \mathcal{L}_{X}\dot{g}_{\alpha\beta} &= -r^{2}\partial_{r}\dot{g}_{\alpha\beta} - \partial_{\alpha}(r^{2})\dot{g}_{r\beta} - \partial_{\beta}(r^{2})\dot{g}_{r\alpha} + O_{1}\left(r^{-3}\right) 
    \\&= {}^{g}\mathbf{m}_{\alpha\beta} + 2\frac{{}^{g}\tilde{\mathbf{m}}_{\alpha\beta}}{r} + O_{1}\left(r^{-2}\right).
\end{aligned}
\end{equation}

We consider the first term, $\divergence_{b}\left(2NK + \mathcal{L}_{X}g\right)$. Note that this is a one-form acting on $\partial_r$, and hence it suffices to only compute the $r$ component of this term. Using the expression for the divergence of a two-tensor $T$,
\begin{equation*}
    (\divergence_{b}T )_{k} = b^{ij}\left(\partial_{j}T_{ik} - {}^{b}\Gamma_{ij}^{l}T_{kl} - {}^{b}\Gamma^{l}_{jk}T_{il}\right),
\end{equation*}
we compute the following:
\begin{equation}
\begin{aligned}
    \divergence_{b}\left(2NK + \mathcal{L}_{X}g\right)_{r} &= -\frac{2}{r^{3}}\trace_{\sigma}\left({}^{g}\mathbf{m} +{}^{k}\mathbf{m}\right) - \frac{1}{r^{3}}\trace_{\sigma}{}^{g}\mathbf{m} - \frac{2}{r^{4}} \left(\tilde{\mathbf{m}}_{rr} + \tilde{\mathbf{p}}_{rr}\right) \\&  - \frac{2}{r^{4}}{}^{g}\tilde{\mathbf{m}}_{rr}  + \frac{2}{r^{4}}\divergence_{\sigma}\left(\mathbf{g}_{r\alpha} + \mathbf{p}_{r\alpha}\right) + \frac{1}{r^{4}}\divergence_{\sigma}\mathbf{g}_{r\alpha}  \\& - \frac{2}{r^{4}}\trace_{\sigma}\left({}^{g}\tilde{\mathbf{m}} +{}^{k}\tilde{\mathbf{m}}\right)   -\frac{2}{r^{4}}\trace_{\sigma}{}^{g}\tilde{\mathbf{m}} + O\left(r^{-5}\right).
\end{aligned}
\end{equation}

The leading-order terms in this expression decay as \(O(r^{-3})\). When multiplied by the factor of order \(r^2\) in Equation (\ref{evolvedcharge}), these terms would generically yield a divergent contribution in the limit of integration over a sphere of radius \(r\). In contrast, the subleading terms, which decay as \(O(r^{-4})\), remain integrable in this limit. This behavior is common to other terms in the evolution of the energy.

We will demonstrate that the leading-order divergences arising from the evolution of the metric and the extrinsic curvature cancel exactly, ensuring that the resulting flux integral remains finite.

The second term is computed similarly,
\begin{equation}
    \begin{aligned}
        \differential\trace_{b}\left(2NK + \mathcal{L}_{X}g\right) &= -\frac{4}{r^{3}}\left(\mathbf{m}_{rr} + \mathbf{p}_{rr}\right) - \frac{2}{r^{3}} \mathbf{m}_{rr} - \frac{4}{r^{3}}\trace_{\sigma}\left({}^{g}\mathbf{m} + {}^{k}\mathbf{m}\right) \\&  - \frac{2}{r^{3}}\trace_{\sigma}{}^{g}\mathbf{m} - \frac{6}{r^{4}}\left(\tilde{\mathbf{m}}_{rr} + \tilde{\mathbf{p}}_{rr}\right) - \frac{6}{r^{4}}\tilde{\mathbf{m}}_{rr} \\& - \frac{6}{r^{4}}\trace_{\sigma}\left({}^{g}\tilde{\mathbf{m}} + {}^{k}\tilde{\mathbf{m}}\right)- \frac{6}{r^{4}}\trace_{\sigma}\tilde{{}^{g}\mathbf{m}} + O\left(r^{-5}\right).
  \end{aligned}
\end{equation}

The following term is given by
\begin{equation}
    \begin{aligned}
        \frac{1}{r^{2}}\trace_{\sigma} (2NK + \mathcal{L}_{X} g) &= \frac{2}{r^{2}}\trace_{\sigma}\left({}^{g}\mathbf{m} +{}^{k}\mathbf{m}\right) + \frac{1}{r^{2}}\trace_{\sigma}{}^{g}\mathbf{m} 
        \\& + \frac{2}{r^{3}}\trace_{\sigma}\left({}^{g}\tilde{\mathbf{m}} + {}^{k}\tilde{\mathbf{m}}\right) + \frac{2}{r^{3}} \trace_{\sigma}{}^{g}\tilde{\mathbf{m}} \\& + O_{1}\left(r^{-4}\right).
    \end{aligned}
\end{equation}

The next set of terms are obtained from the evolution of the angular components of the extrinsic curvature $K_{\alpha\beta}$. The first term is the angular components of the $g$-Hessian of the lapse function. Since $N$ does not depend on the angular coordinates,
\begin{equation}
    \begin{aligned}
        \frac{1}{r^{2}}\sigma^{\alpha\beta}\Hess N_{\alpha\beta} &= -\frac{1}{r^{2}}\sigma^{\alpha\beta}\Gamma^{r}_{\alpha\beta} \\& = 
         -\frac{2}{r^{2}}\mathbf{m}_{rr} - \frac{1}{2r^{2}}\trace_{\sigma}{}^{g}\mathbf{m} - \frac{2}{r^{3}}\tilde{\mathbf{m}}_{rr} 
        \\& -\frac{1}{r^{3}}\divergence_{\sigma}\mathbf{g}_{r\alpha} - \frac{1}{r^{3}}\trace_{\sigma}{}^{g}\tilde{\mathbf{m}}  + O\left(r^{-4}\right).
    \end{aligned}
\end{equation}

The next term is the $b$-trace of $N\Ric_{\alpha\beta}$. We have:
\begin{equation*}
\begin{aligned}
    \Ric_{\alpha\beta} &= \Gamma_{\alpha\beta,r}^{r} + \Gamma_{rr}^{r}\Gamma^{r}_{\alpha\beta} + \Gamma_{\delta r}^{\delta}\Gamma^{r}_{\alpha\beta} - \Gamma^{\delta}_{\beta r}\Gamma_{\alpha\delta}^{r} - \Gamma_{\beta\delta}^{r}\Gamma^{\delta}_{\alpha r}
    \\& - \Gamma^{r}_{\alpha r,\beta} + \Gamma^{r}_{r\delta}\Gamma^{\delta}_{\alpha\beta} - \Gamma^{r}_{\beta r}\Gamma^{r}_{\alpha r} + \Gamma^{\delta}_{\alpha\beta,\delta} - \Gamma^{\delta}_{\alpha\delta,\beta} 
    \\& + \Gamma^{\delta}_{\delta\rho}\Gamma^{\rho}_{\alpha\beta} - \Gamma^{\delta}_{\beta\rho}\Gamma^{\rho}_{\alpha\delta}.
\end{aligned}
\end{equation*}

We first note that the last four terms have cancellations at leading order and contribute only at lower order. Using the expansions of the Christoffel symbols in Appendix \ref{appA}, we find the following:
\begin{equation}
    \begin{aligned}
        \frac{1}{r^{2}}\sigma^{\alpha\beta}N\Ric_{\alpha\beta} & = \frac{1}{r^{2}}\mathbf{m}_{rr} - \frac{1}{2r^{2}}\trace_{\sigma}{}^{g}\mathbf{m} +\frac{4}{r^{3}}\tilde{\mathbf{m}}_{rr} - \frac{2}{r^{3}}\trace_{\sigma}{}^{g}\tilde{\mathbf{m}}
        + O\left(r^{-4}\right).
    \end{aligned}
\end{equation}

The following term is given by:
\begin{equation}
    \begin{aligned}
        \frac{2}{r^{2}}N\sigma^{\alpha\beta} (K \circ K)_{\alpha\beta} &= \frac{2}{r^{2}}N\sigma^{\alpha\beta} \left(g^{rr}K_{\alpha r}K_{\beta r} + g^{r\rho}K_{\alpha r}K_{\beta \rho} + g^{\rho \delta}K_{\alpha \rho}K_{\beta \delta}\right)   
        \\& = -\frac{2}{r^{2}} \trace_{\sigma}{}^{g}\mathbf{m} + \frac{4}{r^{2}} \trace_{\sigma}\left({}^{g}\mathbf{m} + {}^{k}\mathbf{m}\right) - \frac{2}{r^{3}}|{}^{0}\mathbf{g} + {}^{0}\mathbf{p}|^{2}_{\sigma}
        \\& - \frac{2}{r^{3}}\trace_{\sigma}{}^{g}\tilde{\mathbf{m}} + \frac{4}{r^{3}}\trace_{\sigma}\left({}^{g}\tilde{\mathbf{m}} + {}^{k}\tilde{\mathbf{m}}\right) + O_{1}\left(r^{-4}\right),
    \end{aligned}
\end{equation}
where $|{}^{0}\mathbf{g} + {}^{0}\mathbf{p}|^{2}_{\sigma}$ is the contribution from radiation. Following this, we have:
\begin{equation}
    \begin{aligned}
        \frac{1}{r^{2}}N\sigma^{\alpha\beta}HK_{\alpha\beta} &= \frac{N}{r^{2}}\sigma^{\alpha\beta}\left(3\dot{K}_{\alpha\beta} + \trace_{b}\dot{K}b_
        {\alpha\beta} + \dot{g}^{ij}b_{ij}b_{\alpha\beta}\right) + O_{1}\left(r^{-5}\right)
        \\& = \frac{5}{r^{2}}\trace_{\sigma}\left({}^{g}\mathbf{m} + {}^{k}\mathbf{m}\right) + \frac{2}{r^{2}}\left(\mathbf{m}_{rr} + \mathbf{p}_{rr}\right)  - \frac{2}{r^{2}}\mathbf{m}_{rr} \\& 
         - \frac{2}{r^{2}}\trace_{\sigma}{}^{g}\mathbf{m}  - \frac{2}{r^{3}}\tilde{\mathbf{m}}_{rr}  +\frac{2}{r^{3}}\left(\tilde{\mathbf{m}}_{rr} + \tilde{\mathbf{p}}_{rr}\right)\\&  - \frac{2}{r^{3}}\trace_{\sigma}{}^{g}\tilde{\mathbf{m}} + \frac{5}{r^{3}}\trace_{\sigma}\left({}^{g}\tilde{\mathbf{m}} + {}^{k}\tilde{\mathbf{m}}\right)+ O_{1}\left(r^{-4}\right).
    \end{aligned}
\end{equation}

The final term is the Lie derivative of the angular components of the extrinsic curvature with respect to the shift $X$,
\begin{equation}
    \begin{aligned}
        \frac{1}{r^{2}}\sigma^{\alpha\beta}\mathcal{L}_{X}K_{\alpha\beta} &=  \frac{1}{r^{2}}\trace_{\sigma}\left({}^{g}\mathbf{m} + {}^{k}\mathbf{m}\right) + \frac{2}{r^{3}}\trace_{\sigma}\left({}^{g}\tilde{\mathbf{m}} + {}^{k}\tilde{\mathbf{m}}\right) + O\left(r^{-4}\right).
    \end{aligned}
\end{equation}

We summarize the evolution of the extrinsic curvature components $K_{\alpha\beta}$ upon tracing with $\sigma$,
\begin{equation}
\begin{aligned}
    \frac{1}{r^{2}}\sigma^{\alpha\beta}\frac{\partial}{\partial\tau}K_{\alpha\beta} &= -\frac{1}{r^{2}}\mathbf{m}_{rr} - \frac{2}{r^{2}}\left(\mathbf{m}_{rr} + \mathbf{p}_{rr}\right) - \frac{2}{r^{3}}|{}^{0}\mathbf{g}+{}^{0}\mathbf{p}|^{2}_{\sigma} \\&  + \frac{1}{r^{3}}\trace_{\sigma}{}^{g}\tilde{\mathbf{m}} + \frac{1}{r^{3}}\trace_{\sigma}\left({}^{g}\tilde{\mathbf{m}} + {}^{k}\tilde{\mathbf{m}}\right) - \frac{2}{r^{3}}\tilde{\mathbf{m}}_{rr}  \\&- \frac{2}{r^{3}}\left(\tilde{\mathbf{m}}_{rr} + \tilde{\mathbf{p}}_{rr}\right) + O\left(r^{-4}\right).
\end{aligned}
\end{equation}

Now, we put together everything in the integral in equation (\ref{evolvedcharge}). After several cancellations, we obtain
\begin{align}
\frac{\partial}{\partial\tau}E(\tau) &= \frac{1}{16\pi} \lim_{r \rightarrow \infty} \int_{\Sph^{2}_{r}}  
\Bigg( - \frac{4}{r^{2}} |{}^{0}\mathbf{g} + {}^{0}\mathbf{p}|_{\sigma}^{2}  
+ \frac{2}{r^{2}} \divergence_{\sigma} \mathbf{p}_{r\alpha}  
+ \frac{1}{r^{2}} \divergence_{\sigma} \mathbf{g}_{r\alpha}  \Bigg) \, dA_{\Sph^{2}_{r}}  \notag \\ 
&\quad + \frac{1}{16\pi} \lim_{r \rightarrow \infty} \int_{\Sph^{2}_{r}}  
\Big( \frac{4}{r^{2}}\trace_{\sigma} {}^{g} \tilde{\mathbf{m}}  
+ \frac{4}{r^{2}} \trace_{\sigma} \left({}^{g} \tilde{\mathbf{m}} + {}^{k} \tilde{\mathbf{m}}\right) \Big) \, dA_{\Sph^{2}_{r}},
\end{align}
where terms of the decay rate $r^{-3}$ have been discared as they vanish under the limit $r \rightarrow\infty$.

The divergence theorem on closed manifolds ensures that the terms that are total divergences on the sphere, i.e., $\divergence_{\sigma} \mathbf{p}_{r\alpha}$ and $\divergence_{\sigma} \mathbf{g}_{r\alpha}$ integrate to zero. Moreover, since our asusmptions ask for $\tilde{{}^{g}\mathbf{m}}_{\alpha\beta}$ and $\tilde{{}^{k}\mathbf{m}}_{\alpha\beta}$ to be tracefree, the traces of these quantites vanish in the above integral. Therefore, we obtain
\begin{equation*}
    \frac{\partial}{\partial\tau}E(\tau)= -\frac{1}{4\pi} \int_{\Sph^{2}}|{}^{0}\mathbf{g} + {}^{0}\mathbf{p}|^{2}_{\sigma} \: dA_{\Sph^{2}}.
\end{equation*}
\end{proof}

Having established the evolution of energy, we now turn to the evolution of linear momentum. The following theorem governs the loss of linear momentum and requires additional assumptions on the components of \( \dot{g}_{r\alpha} \) and \( \dot{p}_{r\alpha} \). Since much of the proof of Theorem \ref{thm:evolution_P} follows the same steps as Theorem \ref{thm:evolutionE}, we highlight only the key differences.
\begin{theorem}\label{thm:evolution_P}
Let $\idsmatter$ be an asymptotically hyperboloidal initial data set with respect to a diffeomorphism  
$
\Psi: \Hyp^{3} \setminus B_0 \to M \setminus B
$, where $B\subset M, B_0\subset\Hyp^3$ are compact sets, and assume that $\supp(\rho), \supp(J)\subset B$. Let $\tau$ be the time function defining a foliation of standard hyperboloids in Minkowski, and let $N$ and $X$ be the corresponding lapse and shift (\ref{Eqn::standardLapseandShift}). 

We denote the foliation of the globally hyperbolic development of $\left(M,g,K\right)$ by $\left(M\times\{\tau\},g(\tau),K(\tau)\right)$, with $\tau \in (-\varepsilon,\varepsilon)$ for some $\varepsilon > 0$.

Assume that the one-forms $\mathbf{g}_{r\alpha}$ and $\mathbf{p}_{r\alpha}$ are $\sigma$-divergences of symmetric traceless 2-tensors, i.e.,  
\[
\mathbf{g}_{r\alpha} = \divergence_{\sigma}{}^{g}\mathbf{T}_{\alpha\beta}, \quad  
\mathbf{p}_{r\alpha} = \divergence_{\sigma}{}^{k}\mathbf{T}_{\alpha\beta},
\]
where ${}^{g}\mathbf{T}_{\alpha\beta}$ and ${}^{k}\mathbf{T}_{\alpha\beta}$ are trace-free symmetric 2-tensors on the unit sphere.

Then the linear momentum $P^i(\tau)$ is well-defined on each leaf of the foliation $\left(M\times\{\tau\},g(\tau),K(\tau)\right)$ and satisfies  
\begin{equation}\label{equation_evo_P}
\frac{\partial}{\partial\tau}P^i(\tau)
= -\frac{1}{4\pi}
\int_{\Sph^{2}} 
\bigl|{}^{0}\mathbf{g} + {}^{0}\mathbf{p}\bigr|_{\sigma}^{2}
\,x^{i}\,dA_{\Sph^{2}}.
\end{equation}
\end{theorem}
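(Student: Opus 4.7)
The plan is to run the proof of Theorem \ref{thm:evolutionE} essentially verbatim, using the simplified formula \eqref{eq::simplified_momentumAH} for $P^i$, and then handle the additional terms that survive because of the factor $x^i$ in the integrand.

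First, I would apply $\partial_\tau$ to \eqref{eq::simplified_momentumAH} and substitute the vacuum Einstein evolution equations for $g(\tau)$ and $K(\tau)$ together with the lapse and shift \eqref{Eqn::standardLapseandShift}. The resulting integrand is identical to the one appearing in the proof of Theorem \ref{thm:evolutionE}, multiplied pointwise by $x^i$. Since the algebraic cancellations among the leading-order $O(r^{-3})$ divergences and among the subleading contributions are pointwise on $\Sph^2_r$, they carry through to the linear momentum case. Hence I arrive at
\begin{equation*}
\frac{\partial}{\partial\tau}P^i(\tau) = \frac{1}{16\pi}\lim_{r\to\infty}\int_{\Sph^2_r}\!\!\left(-\frac{4}{r^2}|{}^0\mathbf{g}+{}^0\mathbf{p}|_\sigma^2+\frac{2}{r^2}\divergence_\sigma\mathbf{p}_{r\alpha}+\frac{1}{r^2}\divergence_\sigma\mathbf{g}_{r\alpha}+\frac{4}{r^2}\trace_\sigma{}^g\tilde{\mathbf{m}}+\frac{4}{r^2}\trace_\sigma({}^g\tilde{\mathbf{m}}+{}^k\tilde{\mathbf{m}})\right)x^i\,dA_{\Sph^2_r}.
\end{equation*}

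The trace terms vanish immediately, as in the energy proof, because ${}^g\tilde{\mathbf{m}}$ and ${}^k\tilde{\mathbf{m}}$ are $\sigma$-trace-free by Definition \ref{Def::AHyperboloidal}. The main obstacle, and the only place where the argument differs substantively from Theorem \ref{thm:evolutionE}, is the treatment of the divergence terms: here the divergence theorem on $\Sph^2$ no longer kills them, because $x^i$ is not constant. This is precisely why the extra hypothesis $\mathbf{g}_{r\alpha}=\divergence_\sigma{}^g\mathbf{T}_{\alpha\beta}$ and $\mathbf{p}_{r\alpha}=\divergence_\sigma{}^k\mathbf{T}_{\alpha\beta}$ with ${}^g\mathbf{T}$, ${}^k\mathbf{T}$ symmetric and trace-free is imposed.

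Under this hypothesis, $\divergence_\sigma\mathbf{p}_{r\alpha}=\divergence_\sigma\divergence_\sigma{}^k\mathbf{T}$, and integrating by parts twice on the closed manifold $\Sph^2$ gives
\begin{equation*}
\int_{\Sph^2}\divergence_\sigma\divergence_\sigma{}^k\mathbf{T}\,x^i\,dA_{\Sph^2}=\int_{\Sph^2}{}^k\mathbf{T}^{\alpha\beta}\,\nabla_\alpha\nabla_\beta x^i\,dA_{\Sph^2}.
\end{equation*}
The key identity is that the first spherical harmonics satisfy $\nabla_\alpha\nabla_\beta x^i=-x^i\sigma_{\alpha\beta}$ on the unit sphere. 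Substituting and using that ${}^k\mathbf{T}$ is $\sigma$-trace-free gives
\begin{equation*}
\int_{\Sph^2}{}^k\mathbf{T}^{\alpha\beta}\nabla_\alpha\nabla_\beta x^i\,dA_{\Sph^2}=-\int_{\Sph^2}x^i\,\trace_\sigma{}^k\mathbf{T}\,dA_{\Sph^2}=0,
\end{equation*}
and the analogous computation handles the $\mathbf{g}_{r\alpha}$ contribution. Only the radiation term remains, yielding \eqref{equation_evo_P}. The well-definedness of $P^i(\tau)$ on each leaf follows from the same cancellation of leading-order divergences that established well-definedness of $E(\tau)$ in Theorem \ref{thm:evolutionE}, together with the observation that multiplication by the bounded function $x^i$ preserves integrability.
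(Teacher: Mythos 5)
Your proposal is correct and follows essentially the same route as the paper's proof: reduce to the energy computation with an extra factor of $x^i$, then eliminate the divergence terms by integrating by parts twice, invoking the identity $\nabla_\alpha\nabla_\beta x^i=-x^i\sigma_{\alpha\beta}$ for the first spherical harmonics, and using the trace-freeness of ${}^{g}\mathbf{T}$ and ${}^{k}\mathbf{T}$. The only cosmetic difference is that you carry the $\trace_\sigma{}^{g}\tilde{\mathbf{m}}$ and $\trace_\sigma({}^{g}\tilde{\mathbf{m}}+{}^{k}\tilde{\mathbf{m}})$ terms into your intermediate display before discarding them, whereas the paper drops them before stating the analogue of that equation.
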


\begin{proof}
    The proof follows analogously to the proof of Theorem \ref{thm:evolutionE}, with the only difference arising from the divergence terms, which do not integrate to zero a priori. We obtain  
    \begin{equation}\label{eqn:linearmomentumextraterms}
        \frac{\partial}{\partial\tau}P^{i} = \frac{1}{16\pi}\lim_{r\rightarrow\infty}\int_{\Sph^{2}_{r}}\left(-\frac{4}{r^{2}}|{}^{0}\mathbf{g} + {}^{0}\mathbf{p}|^{2}_{\sigma} + \frac{2}{r^{2}}\divergence_{\sigma}\mathbf{p}_{r\alpha} + \frac{1}{r^{2}}\divergence_{\sigma}\mathbf{g}_{r\alpha}\right)x^{i}dA_{\Sph^{2}_{r}}.
    \end{equation}

    We now focus on the divergence terms in the integral. We rewrite the above integral into a total divergence and rest terms,
    \begin{equation*}
    \begin{aligned}    
        \int_{\Sph^{2}}\nabla^{\alpha}\left(2\nabla^{\beta}{}^{k}\mathbf{T}_{\alpha\beta} +  \nabla^{\beta}{}^{g}\mathbf{T}_{\alpha\beta}\right)x^{i} dA_{\Sph^{2}} &= \int_{\Sph^{2}}\nabla^{\alpha}\left(2\nabla^{\beta}{}^{k}\mathbf{T}_{\alpha\beta}x^{i} + \nabla^{\beta}{}^{g}\mathbf{T}_{\alpha\beta}x^{i}\right)dA_{\Sph^{2}} \\ 
        & \quad -  \int_{\Sph^{2}} \left(2\nabla^{\alpha}{}^{k}\mathbf{T}_{\alpha\beta} + \nabla^{\alpha}{}^{g}\mathbf{T}_{\alpha\beta}\right)\nabla^{\beta}x^{i}dA_{\Sph^{2}},
    \end{aligned}
    \end{equation*}
    where $\nabla$ is the Levi-Civita connection on the unit sphere.  

    The first term on the right-hand side vanishes due to the divergence theorem on closed manifolds. Splitting up the second term as before, we obtain
    \begin{equation*}
        \begin{aligned}
            \int_{\Sph^{2}} \left(2\nabla^{\alpha}{}^{k}\mathbf{T}_{\alpha\beta} + \nabla^{\alpha}{}^{g}\mathbf{T}_{\alpha\beta}\right)\nabla^{\beta}x^{i} dA_{\Sph^{2}} &= \int_{\Sph^{2}}\nabla^{\alpha}\left(\left(2{}^{k}\mathbf{T}_{\alpha\beta} + {}^{g}\mathbf{T}_{\alpha\beta}\right)\nabla^{\beta}x^{i}\right)dA_{\Sph^{2}}
            \\& \quad -\int_{\Sph^{2}}\left(2{}^{k}\mathbf{T}_{\alpha\beta} + {}^{g}\mathbf{T}_{\alpha\beta}\right)\nabla^{\alpha}\nabla^{\beta}x^{i} dA_{\Sph^{2}}.
        \end{aligned}
    \end{equation*}

    Again, the first term on the right-hand side vanishes by the divergence theorem. For the second term, recall that the $x^{i}$ are the first spherical harmonics on the unit sphere, so  
    \[
    \nabla^{\alpha}\nabla^{\beta}x^{i} = -x^{i}\sigma^{\alpha\beta}.
    \]  
    Since ${}^{k}\mathbf{T}_{\alpha\beta}$ and ${}^{g}\mathbf{T}_{\alpha\beta}$ are trace-free, this term also vanishes.  

    Thus, we obtain the evolution equation for $P^{i}$ as given in (\ref{equation_evo_P}).  
\end{proof}

As seen in the above proofs, the asymptotic structure of an asymptotically hyperboloidal initial data set from Definition \ref{Def::AHyperboloidal} is not preserved under the Einstein evolution equations. In particular, starting from an asymptotically hyperboloidal initial data $\idsmatter$, at $\tau = 0$, and evolving using the standard lapse $N$ and shift $X$ as in Remark \ref{rmk:standardhyp}, we obtain that the asymptotic behavior of the geometry on the $\tau = 1$ leaf $\left(M,\bar{g},\bar{K}\right)$ is as follows:
    \begin{gather}
        \bar{g}_{rr} = g_{rr} + \frac{\mathfrak{f}(u^{A})}{r^{4}},  \\
        \bar{g}_{\alpha\beta} = g_{\alpha\beta} + 2r\left({}^{0}\mathbf{g}_{\alpha\beta} + {}^{0}\mathbf{p}_{\alpha\beta}\right) + \mathfrak{a}_{\alpha\beta} + \frac{1}{r}\mathfrak{b}_{\alpha\beta}, \\
        \bar{K}_{\alpha\beta} = K_{\alpha\beta} - \mathfrak{f}(u^{A})\sigma_{\alpha\beta} -\frac{2}{r} |{}^{0}\mathbf{g}+ {}^{0}\mathbf{p}|^{2}_{\sigma}\sigma_{\alpha\beta}.
    \end{gather}
    
    Here, $\mathfrak{f} \in C^{2}(\Sph^{2})$, while $\mathfrak{a}_{\alpha\beta} \in C^{2}(\Sph^{2})$, and $\mathfrak{b}_{\alpha\beta} \in C^{1}(\Sph^{2})$, with $\mathfrak{b}_{\alpha\beta}$ trace-free. The terms $g_{rr}$, $g_{\alpha\beta}$, and $K_{\alpha\beta}$ are as in Definition \ref{Def::AHyperboloidal}. The remaining components $\bar{g}_{r\alpha}$, $\bar{K}_{rr}$, and $\bar{K}_{r\alpha}$ evolve according to the Einstein evolution equations, but they are not written explicitly here since  they do not contribute to the computation of the charges $E$ and $P^{i}$ at the orders considered.

    Then, $\left(M,\bar{g},\bar{K}\right)$ is not asymptotically hyperboloidal in the sense of Definition \ref{Def::AHyperboloidal}. However, due to delicate cancellations between the components of $\bar{g}_{rr}$ and $\bar{K}_{\alpha\beta}$ the energy and linear momentum charges are still well-defined in the sense of Michel. 
    
     Since the constraint equations propagate under the Einstein evolution equations \cite{Bartnik2004}, the above data set satisfies the constraints. The energy of the evolved initial data $\left(M,\bar{g},\bar{K}\right)$ is then given by
    \begin{equation}
        E(\tau = 1) = \frac{1}{16\pi}\lim_{r \rightarrow\infty}\int_{\Sph_{r}^{2}} \left( 2\mathbf{m}_{rr} + 3\trace_{\sigma}{}^{g}\mathbf{m} + 2\trace_{\sigma}{}^{k}\mathbf{m} - 4|{}^{0}\mathbf{g} + {}^{0}\mathbf{p}|_{\sigma}^{2} \right) dA_{\Sph^{2}_{r}}.
    \end{equation}

This implies that there are families of asymptotics other than the ones defined in Definition \ref{Def::AHyperboloidal} that admit well-defined notions of energy and linear momentum. With this in mind, we introduce the following definition of \emph{E-P chargeability \`a la Michel}.

\begin{definition}{(E-P Chargeability)}\label{Def:E-P_chargeability}
    $\idsmatter$ is \emph{E-P chargeable} if there exist a spacelike hypersurface in Minkowski $(M_0,g_0,K_0)$, compact sets $\mathcal{K}_0\subset M_0, \mathcal{K}\subset M$, and a diffeomorphism at infinity 
    \[
        \Psi: M_0\setminus \mathcal{K}_0\rightarrow M\setminus\mathcal{K}
    \]
    such that the following integrals converge:
    \begin{align*}
         m(\dot{g},\dot{K},\mathcal{V})=\lim_{r\rightarrow\infty}\frac{1}{16\pi}\oint_{\Sph_r^2} 
            \big[ 
                V\big(&\divergence_0\dot{g}-d\trace_0\dot{g}\big)
                -\iota_{\nabla V}\dot{g} +(\trace_0\dot{g})dV
                +2\big(\iota_\alpha\dot{K}-(\trace_0\dot{K})\alpha)\big)\\
                &+(\trace_0\dot{g}) \iota_\alpha K_0
                +\langle K_0,\dot{g}\rangle_0\alpha-2\iota_\alpha(\dot{g}\circ K_0) \big](n)
                dA_{\Sph^2_r},
    \end{align*}
    for any $\mathcal{V}\in\{\mathcal{V}_0,\mathcal{V}_1,\mathcal{V}_2,\mathcal{V}_3\}$, where $\mathcal{V}_0, \mathcal{V}_i, i=1,2,3$ are the KIDs on $(M_0,g_0,K_0)$ corresponding to the time and spatial translations in Minkowski.
    Here, $dA_{\Sph^2_r}$ is the area element on the sphere of radius $r$ in $(M_0,g_0)$, and $n$ is the outward unit normal to $\Sph^2_r$ within the background manifold. As before, $\dot{g}:=\Psi^*g-g_0$ and $\dot{K}:=\Psi^*K-K_0$.
\end{definition}

 Therefore, it is clear that $(M, \bar{g}, \bar{K})$ is E-P chargeable. 

 One could apply the Einstein evolution equations to $(M,\bar{g},\bar{K})$. This would again introduce terms that appear to diverge at first glance however, cancellations similar to those observed above ensure that E-P chargeability is preserved by the evolution if natural conditions (similar to Definition \ref{Def::AHyperboloidal} and Theorem \ref{thm:evolution_P}) are imposed on the lower order expansions in powers of $\frac{1}{r}$ of $\dot{g}_{\alpha\beta}$, $p_{\alpha\beta}$, $\dot{g}_{r\alpha}$ and $p_{r\alpha}$, on the original data $(M,g,K)$. 
 
 We observe that the fluxes for energy and linear momentum in our hyperboloidal approach closely mirror those in the Bondi--Sachs (null) formalism. For a review of the Bondi-Sachs formalism and mass loss in the null setting, we refer the reader to \cite{Madler:2016xju}. In particular, the combination ${}^{0}\mathbf{g}_{\alpha\beta} + {}^{0}\mathbf{p}_{\alpha\beta}$ plays a role analogous to the news tensor $N_{\alpha\beta}$, which governs energy loss at null infinity \cite{Bondi-vanderBurg-Metzner,Godazgar2019, RKSachs1962}. The news tensor $N_{\alpha\beta}$ is the retarded-time derivative of the leading-order perturbation to the unit-sphere metric in the Bondi expansion. Moreover, the additional conditions on $\mathbf{g}_{r\alpha}$ and $\mathbf{p}_{r\alpha}$ resemble the conditions that ensure the vanishing of the additional terms in the analog of \eqref{eqn:linearmomentumextraterms} in the null setting. Additionally, as outlined above, the terms in the expansion of components of $\dot{g}$ and $p$ can be recursively determined using the evolution equations. This is also similar to the null picture \cite{Chrusciel:1993hx, Godazgar2019}. This leads to a consistent picture of gravitational energy--momentum flux in the hyperboloidal setting.

  \begin{remark}\label{Anderssoncomparison}
It is shown in \cite{Andersson1993} that a necessary condition for smooth or polyhomogenous $\mathscr{I}^{+}$ of the development of asymptotically hyperboloidal initial data is given by the shear free condition (c.f. condition (1.1) in \cite{Andersson1993}). It can be verified using \cite{Andersson2002} that asymptotically hyperboloidal initial data sets as defined in Definition \ref{Def::AHyperboloidal} do not satisfy this condition in general. Therefore, the energy and linear momentum loss computed here extend the already known results proved for smooth and polyhomogeneous $\mathscr{I}^{+}$.
\end{remark}

\section{Generalized observers}\label{sec6}

In this section, we explore generalizations of the height functions introduced in Section \ref{sec4} beyond the standard hyperboloid. Our goal is to extend the definitions of energy and linear momentum to cases where the background initial data deviates from the standard hyperboloid at suitably lower orders. 

\begin{definition}\label{Def::generalizedhyperboloid}
    We say that an initial data set $\left(\tilde{M},\tilde{b},\tilde{K}\right)$ is a generalized hyperboloidal initial data set of order $\lambda$ if $\tilde{b} = \left(1 - H(r)^{2}\right)dr^{2} + r^{2}d\sigma^{2}$, for some $H(r)$ satisfying (\ref{choice_H}), and 
    \begin{equation}
        \tilde{K}_{rr}=\frac{H'}{\sqrt{1-H^2}}\tilde{b}_{rr}, \quad \tilde{K}_{\alpha l}=\frac{H}{r\sqrt{1-H^2}} \tilde{b}_{\alpha l},
    \end{equation}
    where $\lambda\geq 3$ is the parameter appearing in the expansion of $H(r)$ in Equation \eqref{choice_H}.
\end{definition}

In the literature, these are sometimes called hyperboloidal data, which is a broader definition than the one we consider in our paper. It is easily checked that choosing $H$ corresponding to the standard hyperboloid we recover $(\Hyp^3,b,b)$. To avoid confusion, we treat the two definitions separately and reserve ``hyperboloid" and ``asymptotically hyperboloidal" only for the standard hyperboloid $(\Hyp^3,b,b)$. It is easy to check that the conditions on $\tilde{K}$ are so that the initial data in Definition \ref{Def::generalizedhyperboloid} embeds as a $\tau$-constant hypersurface in $\left(\mathbb{R}^{3,1},\eta\right)$, as seen from equation (\ref{metric_tau}). The time and spatial translation symmetries of the spacetime correspond to KIDs on the initial data. In vacuum, the KIDs coincide precisely with the normal and tangential components of the decomposition of the Killing vector fields along the initial data \cite{NewKIDs} and therefore we can compute them directly from (\ref{metric_tau}). 
For each choice of $h(r)$, we derive the lapse and shift governing the evolution of the initial data set in the standard Killing time direction in Minkowski.

\begin{lemma}\label{choice_lapseshift}
    Let $\left(\tilde{M},\tilde{b},\tilde{K}\right)$ be a generalized hyperboloidal initial data of order $\lambda\geq 3$, for some $H(r)$ satisfying (\ref{choice_H}). The lapse and shift describing the evolution of the initial data with respect to the standard Killing time in Minkowski are 
    \begin{equation}
        N = \frac{1}{\sqrt{1-H(r)^2}} \quad \text{and} \quad X^r = -\frac{H(r)}{1-H(r)^2},
    \end{equation}
\end{lemma}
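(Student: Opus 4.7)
The plan is to read off $N$ and $X$ directly from the ADM-type decomposition of the Minkowski metric given in \eqref{metric_tau}, by comparing it with the general form \eqref{eq:gdecomposition}. The key observation is that since $\tau = t - h(r)$, at fixed spatial coordinates $(r,\theta,\varphi)$ the level sets of $\tau$ correspond precisely to translations in the standard Minkowski time $t$; hence the coordinate vector field $\partial_\tau$ in the $(\tau,r,\theta,\varphi)$ chart coincides with the Killing vector $\partial_t$ of Minkowski. This identification is what makes the evolution of the initial data along $\partial_\tau$ the evolution in the standard Killing time direction.

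Next, I would write out the components of the Minkowski metric in these coordinates, which by \eqref{metric_tau} are
\[
\gamma_{\tau\tau}=-1,\quad \gamma_{\tau r}=-H(r),\quad \gamma_{rr}=1-H(r)^2,\quad \gamma_{\alpha\beta}=r^2\sigma_{\alpha\beta},
\]
with all remaining components vanishing, and compare term by term with \eqref{eq:gdecomposition}. The ADM form dictates $\gamma_{\tau i}=\tilde{b}_{ij}X^j$. Since $\gamma_{\tau\alpha}=0$ and $\tilde{b}$ has no mixed radial-angular terms, this forces $X^\alpha=0$. The radial component then gives
\[
-H(r)=\tilde{b}_{rr}X^r=(1-H(r)^2)\,X^r,
\]
from which $X^r=-H(r)/(1-H(r)^2)$ follows immediately.

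Finally, $N$ is determined from the $\tau\tau$ component: the ADM form gives $\gamma_{\tau\tau}=-N^2+\tilde{b}_{ij}X^iX^j$, so
\[
-1=-N^2+(1-H(r)^2)(X^r)^2=-N^2+\frac{H(r)^2}{1-H(r)^2},
\]
which rearranges to $N^2=1/(1-H(r)^2)$, yielding $N=1/\sqrt{1-H(r)^2}$ after taking the positive root (consistent with a future-directed unit normal, since by \eqref{choice_H} we have $H(r)<1$ in the relevant asymptotic regime).

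There is no real obstacle here: the result is algebraic, a direct comparison of coefficients, and the tensorial structure on the angular part immediately kills the angular shift. The only point requiring mild care is the identification $\partial_\tau=\partial_t$, which is precisely the content of choosing $\tau$ as a height function in the sense of \cite{Beig-Murchadha}; once this is noted, no further input from the extrinsic curvature $\tilde K$ (as given in Definition \ref{Def::generalizedhyperboloid}) is needed, since $\tilde K$ is already consistent with being the second fundamental form of the $\tau$-level set in Minkowski for this choice of $N$ and $X$.
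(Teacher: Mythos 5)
Your proposal is correct and is precisely the ``direct computation'' that the paper's one-line proof alludes to: comparing the ADM form \eqref{eq:gdecomposition} with the decomposition \eqref{metric_tau} under the identification $\partial_\tau=\partial_t$, which yields $X^r=-H/(1-H^2)$ from the $\tau r$ component and $N^2=1/(1-H^2)$ from the $\tau\tau$ component. It also checks against the standard hyperboloid case \eqref{Eqn::standardLapseandShift}, so there is nothing to add.
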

\begin{proof}
   This follows by direct computation.
\end{proof}

We define generalized asymptotically hyperboloidal initial data sets that are asymptotic to this generalized background, extending Definition \ref{Def::AHyperboloidal}. 
\begin{definition}\label{Def::GeneralizedAH_IDS}
We say that $\idsmatter$ is a generalized asymptotically hyperboloidal initial data set of order $\lambda\geq 3$ w.r.t. a diffeomorphism at infinity $\Psi$ if there exists a generalized hyperboloidal initial data $(\tilde{M},\tilde{b},\tilde{K})$ of order $\lambda$ and two compact sets $B\subset M$ and $B_0\subset \tilde{M}$ such that 
\[
    \Psi:\tilde{M}\setminus B_0\rightarrow M\setminus B
\]

and $\dot{g}:=\Psi^*g-\tilde{b}$, $p:=\Psi^*K-\tilde{K}-\dot{g}$
satisfy the decay conditions as in Definition \ref{Def::AHyperboloidal}.
\end{definition}

It should be noted here that the generalized background of a generalized asymptotically hyperboloidal initial data is not unique. In particular, two different choices of the function $H$ having the same expansion up to order $r^{-7}$ would give the same asymptotic expansion of the correction terms\footnote{The other components of $\dot{g}$ and $p$ do not depend on the function $H$ so they do not need to be considered here.}
\begin{align*}
    \dot{g}_{rr}=\Psi^*g_{rr}-(1-&H^2),\\
p_{rr}=\Psi^*K_{rr}-\frac{H'}{(1-H^2)^{\frac{3}{2}}}&\tilde{b}_{rr}-\dot{g}_{rr},%\quad p_{\alpha l}=\Psi^*K_{\alpha l}-\frac{H}{r\sqrt{1-H^2}}\tilde{b}_{\alpha l}-\dot{g}_{\alpha l}    
\end{align*}
up to the order that contributes in the computation of charges and their evolution. Given a generalized asymptotically hyperboloidal initial data set $\idsmatter$ that is asymptotic via a diffeomorphism at infinity to a generalized hyperboloidal data $(\tilde{M},\tilde{b},\tilde{K})$ defined via a function $\bar{H}$, we define the class of generalized background corresponding to $\idsmatter$ as the generalized hyperboloidal data defined via $H$ in 
\begin{equation}\label{Eqn::classofH}
    \mathcal{H}:=\left\{H(r)=1-\frac{1}{2r^2}+O_2(r^{-3}) \ |\ H-\bar{H} \sim O_{2}(r^{-7})\right\}.
\end{equation}

In the following Theorem, we extend the definition of energy and linear momentum to generalized asymptotically hyperboloidal initial data sets as in Definition \ref{Def::GeneralizedAH_IDS} and prove that the resulting expressions are the same as in Theorem \ref{THM::Energy_and_Linearmomentum_AH}. These charges will not depend on the choice of representative in the class $\mathcal{H}$. More precisely, given generalized asymptotically hyperboloidal initial data, there exist coordinates at infinity such that the geometry of the initial data approaches, in a suitable way, a class of generalized hyperboloidal data. After selecting a representative generalized background in this class, which is a hypersurface in Minkowski spacetime, we compute the KIDs corresponding to time and spatial translations and obtain the following result.  

\begin{theorem}
    Let $\idsmatter$ be a generalized asymptotically hyperboloidal initial data set of order $\lambda\geq 3$, in the sense of Definition \ref{Def::GeneralizedAH_IDS}.
    The energy and linear momentum of this initial data are defined as the Michel charges corresponding to the KIDs of the generalized hyperboloidal initial data $(\tilde{M},\tilde{b},\tilde{K})$, associated with the time and spatial translations in Minkowski. They are well defined and have the following expressions
    \begin{equation}\label{Eqn::EnergyAH::generalized}
        E = \frac{1}{16\pi}\int_{\Sph^{2}} 2\mathbf{m}_{rr} + 3 \, \trace_{\sigma}\ \!^{g}\mathbf{m}+ 2\trace_{\sigma}\ \!^{k}\mathbf{m} \: dA_{\Sph^{2}}
    \end{equation}
\begin{equation}\label{Eqn::MomentumAH::generalized}
    P^{i} = \frac{1}{16\pi}\int_{\Sph^{2}}\left( 2\mathbf{m}_{rr} + 3 \, \trace_{\sigma}\ \!^{g} \mathbf{m} + 2\trace_{\sigma}\ \!^{k} \mathbf{m} \right) x^{i} \, dA_{\Sph^{2}}, \quad i = 1,2,3.
\end{equation}
where $x^{i}$ denote the first spherical harmonics on the unit sphere $\Sph^2$. In particular, energy and linear momentum of generalized asymptotically hyperboloidal initial data are well defined and are independent of the choice of generalized background or, equivalently, of the choice of $H\in\mathcal{H}$.
\end{theorem}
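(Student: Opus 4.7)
My plan is to adapt the proof of Theorem \ref{THM::Energy_and_Linearmomentum_AH} by replacing the standard background $(\Hyp^3, b, b)$ with the generalized background $(\tilde{M}, \tilde{b}, \tilde{K})$, and verifying that the condition $H = 1 - \frac{1}{2r^2} + O_2(r^{-\lambda})$ with $\lambda \geq 3$ ensures all resulting corrections are subleading and do not alter the limiting charge integrals.

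First I would identify the relevant KIDs on $(\tilde{M}, \tilde{b}, \tilde{K})$. The time translation Killing vector in Minkowski decomposes as $\partial_t = N\nu + X$ on each $\{\tau = \mathrm{const}\}$ hypersurface, with $N$ and $X$ given by Lemma \ref{choice_lapseshift}, so that $\mathcal{V}_0 = (V_0, Y_0)$ with $V_0 = (1-H^2)^{-1/2}$ and $Y_0 = -\frac{H}{1-H^2}\partial_r$. Spatial translation KIDs $\mathcal{V}_i$ are obtained analogously by projecting $\partial_{x^i}$ onto the normal and tangent directions of the generalized hyperboloid. The expansion of $H$ yields $V_0 = \sqrt{1+r^2} + O(r^{1-\lambda})$ and analogous asymptotic agreement for the spatial KIDs with those of the standard hyperboloid used in Theorem \ref{THM::Energy_and_Linearmomentum_AH}.

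Next I would substitute these KIDs and the asymptotic expansions of $\dot{g} = \Psi^*g - \tilde{b}$ and $p = \Psi^*K - \tilde{K} - \dot{g}$ from Definition \ref{Def::GeneralizedAH_IDS} into Michel's charge integrand, with $b$ replaced by $\tilde{b}$ throughout. The deviations from the computation of Theorem \ref{THM::Energy_and_Linearmomentum_AH} come from three sources: (i) $\tilde{b}_{rr} - b_{rr} = O(r^{-\lambda})$, which shifts the Christoffel symbols of $\tilde{b}$ from those of $b$ by $O(r^{-\lambda-1})$, noting that the angular components $\tilde{b}_{\alpha\beta} = r^2 \sigma_{\alpha\beta}$ agree identically so the perturbation is confined to the $rr$-block; (ii) the difference $\tilde{K} - b$, entering at the same subleading order; and (iii) the KID corrections above. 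Tracking each term of the integrand, one verifies that every such deviation produces a contribution of order $r^{-3-\lambda}$ or smaller, so that after multiplication by the $r^2$ surface-area factor the correction is $o(1)$ as $r \to \infty$. The remaining leading-order terms reproduce precisely the expressions computed in Theorem \ref{THM::Energy_and_Linearmomentum_AH}, giving \eqref{Eqn::EnergyAH::generalized} and \eqref{Eqn::MomentumAH::generalized}.

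For the independence statement, given $H_1, H_2 \in \mathcal{H}$ with $H_1 - H_2 = O_2(r^{-7})$, the corresponding background components $\tilde{b}^{(i)}_{rr} = 1 - H_i^2$ differ by $O_2(r^{-7})$, and likewise for the relevant entries of $\tilde{K}^{(i)}$. Consequently $\dot{g}_{rr}$ and $p_{rr}$ differ between the two choices by $O_2(r^{-7})$, so the coefficients $\mathbf{m}_{rr}$ at order $r^{-5}$ and $\tilde{\mathbf{m}}_{rr}$ at order $r^{-6}$---the only ones entering \eqref{Eqn::EnergyAH::generalized} and \eqref{Eqn::MomentumAH::generalized}---are unchanged. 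The main technical obstacle will be the bookkeeping in the middle step: ensuring that every substitution $b \to \tilde{b}$ inside divergences, traces, inner products, and KID contractions indeed produces only subleading corrections. This is substantially facilitated by the exact equality of $\tilde{b}$ and $b$ on the angular block, so that deviations enter only through the $(r,r)$-entries.
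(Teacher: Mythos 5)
Your proposal follows essentially the same route as the paper: identify the KIDs of the generalized background from the lapse--shift decomposition of Lemma \ref{choice_lapseshift}, substitute them together with $\dot{g}=\Psi^*g-\tilde b$ and $p=\Psi^*K-\tilde K-\dot g$ into the Michel charge integrand built from $\tilde b$, and check that only the same terms survive the limit; the paper simply carries this out by writing every coefficient explicitly in terms of $H$ rather than as a perturbation off the standard-hyperboloid computation. One quantitative claim in your bookkeeping is too optimistic, though it does not break the argument: since the standard hyperboloid has $H_{\mathrm{std}}=1-\tfrac{1}{2r^2}+\tfrac{3}{8}r^{-4}+\dots$ while a general admissible $H$ is unconstrained at order $r^{-4}$ (for $\lambda\geq 4$), one only gets $\tilde b_{rr}-b_{rr}=O(r^{-\min(\lambda,4)})$ rather than $O(r^{-\lambda})$, and correspondingly $\tilde b^{rr}-b^{rr}=O(1)$, $\tilde\Gamma^r_{\alpha\beta}-{}^b\Gamma^r_{\alpha\beta}=O(r)$ (even $O(r^2)$ when $\lambda=3$), so the resulting deviations in the integrand are only $O(r^{-3})$, not $O(r^{-3-\lambda})$. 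This still suffices, since anything $o(r^{-2})$ in the integrand vanishes against the $r^2$ area factor, and your independence argument for $H\in\mathcal{H}$ matches the paper's discussion of the class \eqref{Eqn::classofH}.
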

\begin{proof}
    Let $(\tilde{M},\tilde{b},\tilde{K})$ be a representative of the class of generalized backgrounds corresponding to the initial data $\idsmatter$, with respect to a diffeomorphism at infinity $\Psi$. Let $H$ be the function defining this background, as in Definition \ref{Def::generalizedhyperboloid}. Then, the KID $(V, Y=\alpha^\#)$  associated with the time translation symmetry in Minkowski (Lemma \ref{choice_lapseshift}) can be written in terms of $H$:
    \[
        V=\frac{1}{\sqrt{1-H^2}},\quad Y^r=\frac{-H}{1-H^2}.
    \]
    The energy of $\idsmatter$ is
    \allowdisplaybreaks[0]
    \begin{align*}
        E=\lim_{r\to\infty}\frac{1}{16\pi}\int_{\Sph^2_r}\Big\{V(\divergence_0\dot{g}-d\trace_0\dot{g})-\iota_{\nabla V}\dot{g}+(\trace_0\dot{g})dV+2\big(\iota_\alpha\dot{K}-(\trace_0\dot{K})\alpha\big)\\
        +(\trace_0\dot{g})\iota_\alpha \tilde{K}+\langle \tilde{K},\dot{g}\rangle_0\alpha -2\iota_\alpha(\dot{g}\circ \tilde{K})\Big\} \; \frac{\partial_r^{\mbox{\footnotesize{tang}}}}{|\partial_r^{\mbox{\footnotesize{tang}}}|_{\eta}} \; dA_{\Sph^2_r},
    \end{align*}
    whenever the integral is well-defined and the limit exists. Traces and derivatives in the above expression are taken with respect to the generalized background metric $\tilde{b}$. The vector field $\partial_r^{\mbox{\footnotesize{tang}}}$, is normal to the round spheres $\Sph^2_r$ on $(\tilde{M},\tilde{b},\tilde{K})$ and tangential to the initial data. In terms of the standard frame $\partial_t,\partial_r,\partial_\alpha$ in Minkowski:
    \begin{equation*}
        \partial_r^{\mbox{\footnotesize{tang}}}=-H^2\sqrt{1-H^2}\partial_t+\frac{1}{H\sqrt{1-H^2}}\partial_r
    \end{equation*}
    and, by construction, $dr(\partial_r^{\mbox{\footnotesize{tang}}})=1$. 
    
    The Christoffel symbols of the generalized metric are the same as for the standard hyperboloid, except for the following
    \begin{equation*}
        \Gamma^r_{rr}=\frac{-H H'}{1-H^2},\quad
        \Gamma^r_{\alpha\beta}=\frac{-r}{1-H^2}\sigma_{\alpha\beta}.
    \end{equation*}
Substituting the above in the expression for the energy, leads to the following charge integrand:
    \begin{align*}
        \frac{1}{(1-H^2)}\Big[\frac{1}{1-H^2}\Big(\dot{g}_{rr,r}+\frac{2H H'}{1-H^2}\dot{g}_{rr}\Big)+\frac{\sigma^{\alpha\beta}}{r^2}\Big(\dot{g}_{\alpha r,\beta}+\frac{r\sigma_{\alpha\beta}}{1-H^2}\dot{g}_{rr}-\ \!^{\Sph^2}\Gamma^\gamma_{\alpha\beta}\dot{g}_{r\gamma}-\frac{1}{r}\dot{g}_{\alpha\beta}\Big)\\
        -\frac{d}{dr}\left(\frac{1}{1-H^2}\right)\dot{g}_{rr}-\frac{1}{1-H^2}\dot{g}_{rr,r}+\frac{2}{r^3}\trace_\sigma(\dot{g}_{\alpha\beta})-\frac{1}{r^2}\trace_\sigma(\dot{g}_{\alpha\beta,r})\Big] -\frac{H H'}{(1-H^2)^3}\dot{g}_{rr}\\
        +\left(\frac{1}{1-H^2}\dot{g}_{rr}+\frac{1}{r^2}\trace_\sigma(\dot{g}_{\alpha\beta})\right)\frac{H H'}{(1-H^2)^2}+\frac{2}{\sqrt{1-H^{2}}}\left[\frac{-H}{1-H^2}\dot{K}_{rr}+(\trace_{\tilde{b}}\dot{K})\tilde{b}_{rr}\frac{H}{1-H^2}\right]\\
        -\left(\frac{1}{1-H^2}\dot{g}_{rr}+\frac{1}{r^2}\trace_\sigma(\dot{g}_{\alpha\beta})\right)\frac{H H'}{(1-H^2)^{2}}-\left[\frac{H H'}{(1-H^2)^3}\dot{g}_{rr}+\frac{H^2}{r^3(1-H^2)}\trace_\sigma(\dot{g}_{\alpha\beta})\right]\\ +\frac{2H H'}{(1-H^2)^3}\dot{g}_{rr}.
    \end{align*}
  The expression for energy of a generalized asymptotically hyperboloidal initial data is obtained computing the coefficients of the contributing terms,
    \begin{equation*}
        \left(\frac{2}{r^3(1-H^2)}-\frac{H^2}{r^3(1-H^2)}+\frac{2H}{r^2\sqrt{1-H^{2}}}\right)\trace_\sigma(\dot{g}_{\alpha\beta}) = \frac{3}{r^{2}}\trace_{\sigma}{}^{g}\mathbf{m}  + O\left(r^{-3}\right),
    \end{equation*}

    \begin{equation*}
        \frac{2H}{\sqrt{1-H^{2}}}\frac{1}{r^{2}} \trace_{\sigma}(p_{\alpha\beta}) = \frac{2}{r^{2}}\trace_{\sigma}{}^{k}\mathbf{m} + O\left(r^{-3}\right),
    \end{equation*}
and
    \begin{equation*}
        \frac{2}{r(1-H^{2})^{2}}\dot{g}_{rr} = \frac{2}{r^{3}} \mathbf{m}_{rr} + O\left(r^{-3}\right)
    \end{equation*}
Therefore, it is clear that we retrieve the expressions for energy seen in Theorem \ref{THM::Energy_and_Linearmomentum_AH}. The expression for linear momentum is derived analogously.  
\end{proof}

\begin{remark} 
 It should be noted that the above can be further generalized to include hyperboloids of radius different from 1. In this case, Equation \eqref{choice_H} becomes $H(r)=1-\frac{C}{r^2}+O(r^{-\lambda})$, for some constant $C>0$. However, defining generalized asymptotically hyperboloidal data as in Definition \ref{Def::GeneralizedAH_IDS} with such a background would require careful adjustments to the expression of the asymptotic decay of the data, as a dependence on the constant $C$ would then emerge. Since we do not expect this generalization to yield significant additional insights, we do not pursue it here. 
 The same procedure can be extended to produce (generalized) hyperboloidal foliations in the Schwarzschild spacetime \cite{Anil}. In this case, the height function is given by \begin{equation} 
    h(r) = r + 2m \ln(r) + \frac{C}{r} + O_{3}\left(r^{-\lambda+1}\right),    
    \end{equation} 
    where $m$ is the Schwarzschild mass. The same holds for the Kerr spacetime. In general, this procedure to construct height functions extends to asymptotically Minkowskian spacetimes. 
\end{remark}

\section{Discussion}\label{discussion}
In this paper, we derived the flux of energy and linear momentum for asymptotically hyperboloidal initial data sets using the Einstein evolution equations. Our approach, based on geometric invariants of the initial data, provides a different perspective compared to traditional frameworks like the Bondi-Sachs formalism.
In the future, we aim to investigate the evolution of energy and linear momentum with respect to the generalized observers described in Section \ref{sec6}. We expect that the class of observers would need to be suitably restricted to ensure that E-P chargeability is preserved. For this class of observers, we anticipate that the formulae for energy and linear momentum loss will be the same as those obtained in this paper. An important direction for future work is to compare our framework with the Hamiltonian charges of energy and linear momentum and to explore whether any discrepancies emerge. This is particularly motivated by observations of similar issues at the asymptotically Euclidean level, where angular momentum has been shown to exhibit related behavior \footnote{Personal communication with Carla Cederbaum, December 2024}.

Additionally, we plan to extend this study to other conserved quantities, such as angular momentum and the center of mass, where supertranslation ambiguities may arise. A further objective is to explore initial data with polyhomogeneous expansions \cite{Chrusciel:1993hx,Andersson1993}. Finally, we aim to apply this framework to models of matter, such as electromagnetism, to study the flux of conserved quantities in the presence of matter fields.

\begin{appendices}
 \section{Christoffel Symbols of the Standard Hyperboloid $(\mathbb{H}^{3}, b)$}\label{appA}
 Consider the hyperboloid $\mathbb{H}^{3}$ with the metric written in spherical coordinates as
\begin{equation}
    b = \frac{dr^{2}}{1+r^{2}} + r^2\sigma,
\end{equation}
where $\sigma$ is the metric on the standard 2-sphere $\mathbb{S}^{2}$. We denote the Christoffel symbols corresponding to the hyperbolic metric as ${}^{b}\Gamma^{i}_{jk}$. We use Latin indices to denote the coordinates on the hyperboloid, and Greek indices to denote the coordinates on the 2-sphere. The Christoffel symbols are computed using the standard formula
\begin{equation}
    {}^{b}\Gamma^{i}_{jk} = \frac{1}{2}b^{ip}\left(b_{pj,k} + b_{pk,j} - b_{jk,p}\right).
\end{equation}
The non-vanishing Christoffel symbols of $b$ are as follows:
    $${}^{b}\Gamma^{r}_{rr} = \frac{-r}{1+r^{2}},$$
    $${}^{b}\Gamma^{\alpha}_{r\beta} = \frac{1}{r} \delta^{\alpha}_{\beta}, $$
    $${}^{b}\Gamma^{r}_{\alpha\beta} = -r(1+r^{2})\sigma_{\alpha\beta},$$
    $${}^{b}\Gamma^{\alpha}_{\beta\gamma} = {}^{\mathbb{S}^{2}}\Gamma^{\alpha}_{\beta\gamma}.$$

\section{Christoffel Symbols of Asymptotically Hyperboloidal Initial Data Sets}\label{appB}
In this section, we consider an asymptotically hyperboloidal metric as seen in Definition \ref{Def::AHyperboloidal}. We start by computing the inverse metric and inverse extrinsic curvature components:
\begin{equation}\label{AHyperboloidal_inv}
    \begin{aligned}
        \dot{g}^{rr} &= -\frac{\mathbf{m}_{rr}}{r} - \frac{\tilde{\mathbf{m}}_{rr}}{r^{2}} + O_{2}\left(r^{-3}\right),  \quad  
        \dot{g}^{\alpha r} = - \frac{\mathbf{g}^{\alpha}_{r}}{r^{3}} + O_{2}(r^{-4}),  \\ 
        \dot{g}^{\alpha\beta} &= - \frac{{}^{0}\mathbf{g}^{\alpha\beta}}{r^{4}} - \frac{{}^g\mathbf{m}^{\alpha\beta}}{r^{5}} 
        - \frac{{}^{g}\tilde{\mathbf{g}}^{\alpha\beta}}{r^{6}} +O_{2}\left(r^{-7}\right),
    \end{aligned}
\end{equation}
and
\begin{equation}
    \begin{aligned}
        {p}^{rr} &= - \frac{\mathbf{p}_{rr}}{r} - \frac{\tilde{\mathbf{p}}_{rr}}{r^{2}} + O_{2}\left(r^{-3}\right),  \quad  
        {p}^{\alpha r} = -\frac{\mathbf{p}_{r}^{\alpha}}{r^{3}} + O_{2}(r^{-4}),  \\  
        {p}^{\alpha\beta} &= -\frac{{}^{0}\mathbf{p}^{\alpha\beta}}{r^{4}} -\frac{\!^k\mathbf{m}^{\alpha\beta}}{r^{5}} 
        - \frac{{}^{k}\tilde{\mathbf{m}}^{\alpha\beta}}{r^6} + O_{2}\left(r^{-7}\right).
    \end{aligned}
\end{equation}
We denote the Christoffel symbols corresponding to the asymptotically hyperboloidal  metric as $\Gamma^{i}_{jk}$. We express here only the first two subleading terms beyond the background Christoffel symbols, which are the highest-order corrections we consider.
\begin{align*}
    \Gamma^{r}_{rr} &= {}^{b}\Gamma^{r}_{rr} + \frac{1}{2}\dot{g}^{rr}\left(b_{rr,r}\right) + \frac{1}{2}b^{rr}\left(\dot{g}_{rr,r}\right) + O_{1}(r^{-7})
    \\&= {}^{b}\Gamma^{r}_{rr} -\frac{3}{2}\frac{\mathbf{m}_{rr}}{r^{4}}  -\frac{2\tilde{\mathbf{m}}_{rr}}{r^{5}} + O_{1}(r^{-6}),
\end{align*}
\begin{align*}
    \Gamma^{\alpha}_{r\beta} &= {}^{b}\Gamma^{\alpha}_{r\beta} + \frac{1}{2}\dot{g}^{\alpha\delta}\left(b_{\delta\beta,r}\right) + \frac{1}{2}b^{\alpha\delta}\left(\dot{g}_{\delta r,\beta} + \dot{g}_{\delta \beta,r} - \dot{g}_{r\beta,\delta}\right) + O_{1}(r^{-7})
    \\&= {}^{b}\Gamma^{\alpha}_{r\beta} -\frac{3}{2}\frac{\mathbf{m}^{\alpha}_{\beta}}{r^{4}} - \frac{2\tilde{\mathbf{m}}^{\alpha}_{\beta}}{r^{5}} + \frac{1}{2} \left(\frac{\mathbf{g}^{\alpha}_{r,\beta}}{r^{5}} - \frac{\mathbf{g}_{\beta r},^{\alpha}}{r^{5}}\right) + O_{1}\left(r^{-6}\right),
\end{align*}
\begin{align*}
    \Gamma^{r}_{\alpha\beta} &= {}^{b}\Gamma^{r}_{\alpha\beta} 
    + \frac{1}{2} \dot{g}^{rr} \left(-b_{\alpha\beta,r}\right)  
    + \frac{1}{2} \dot{g}^{r\rho} \left( b_{\rho\alpha,\beta} 
    + b_{\rho\beta,\alpha} - b_{\alpha\beta,\rho} \right)  \\
    &\quad + \frac{1}{2} b^{rr} \left(\dot{g}_{r\alpha,\beta} 
    + \dot{g}_{r\beta,\alpha} - \dot{g}_{\alpha\beta,r} \right) 
    + O(r^{-3}) \\
    &= {}^{b}\Gamma^{r}_{\alpha\beta} + \mathbf{m}_{rr} \sigma_{\alpha\beta} 
    + \frac{1}{2} {}^{g} \mathbf{m}_{\alpha\beta} 
    + \frac{\tilde{\mathbf{m}}_{rr}}{r} \sigma_{\alpha\beta} 
    + \frac{1}{2} \left( \frac{{}^{g} \mathbf{g}_{r\alpha,\beta} 
    + {}^{g} \mathbf{g}_{r\beta,\alpha}}{r} \right) \\
    &\quad - {}^{\Sph^{2}} \Gamma^{\rho}_{\alpha\beta} 
    \frac{\mathbf{g}_{r\rho}}{r} 
    + \frac{{}^{g} \tilde{\mathbf{m}}_{\alpha \beta}}{r} 
    + O_{1} (r^{-2}),
\end{align*}

\begin{align*}      
    \Gamma_{\alpha\beta}^{\delta} &= {}^{b}\Gamma^{\delta}_{\alpha\beta} + \frac{1}{2}\dot{g}^{\delta\rho}\left(b_{\rho\alpha,\beta} + b_{\rho\beta,\alpha} - b_{\alpha\beta,\rho}\right) + \frac{1}{2}b^{\delta\rho}\left(\dot{g}_{\rho\alpha,\beta} + \dot{g}_{\rho\beta,\alpha} - \dot{g}_{\alpha\beta,\rho}\right) + O_{1}(r^{-5})
    \\&= {}^{\mathbb{S}^{2}}\Gamma^{\delta}_{\alpha\beta} +  O(r^{-3}),
\end{align*}
\begin{align*}
    \Gamma^{r}_{\alpha r} &= \frac{1}{2}\dot{g}^{rr}\left( b_{rr,\alpha}\right) + \frac{1}{2}b^{rr}\left(\dot{g}_{rr,\alpha}\right) + \frac{1}{2}\dot{g}^{r\rho}\left(b_{\rho \alpha ,r}\right) + O_{1}\left(r^{-3}\right) 
    \\& = -\frac{\mathbf{g}_{r\alpha}}{r^{2}} + O_{1}\left(r^{-3}\right).
\end{align*}
For reference, we also list here the products of certain Christoffel symbols.
\begin{align*}
    \Gamma^{r}_{rr} \Gamma^{r}_{\alpha\beta} &= {}^{b} \Gamma^{r}_{rr} \, {}^{b} \Gamma^{r}_{\alpha\beta}  
    + \frac{1}{2} \frac{\mathbf{m}_{rr}}{r} \sigma_{\alpha\beta}  
    - \frac{1}{2} \frac{{}^{g} \mathbf{m}_{\alpha\beta}}{r}  
    + \frac{\tilde{\mathbf{m}}_{rr}}{r^{2}} \sigma_{\alpha\beta}  \\
    &\quad +\frac{1}{2} {}^{\Sph^{2}} \Gamma_{\alpha\beta}^{\rho} 
    \frac{\mathbf{g}_{r\rho}}{r^{2}}  
    - \frac{1}{2} \left( \frac{ \mathbf{g}_{r\alpha,\beta} 
    +  \mathbf{g}_{r\beta,\alpha}}{r^{2}} \right)  
    - \frac{\tilde{{}^{g} \mathbf{m}}_{\alpha\beta}}{r^{2}}  
    + O_{1} (r^{-4}),
\end{align*}

\begin{align*}
    \Gamma^{\rho}_{\alpha r} \Gamma^{r}_{\beta\rho} &= {}^{b} \Gamma^{\rho}_{\alpha r} \, {}^{b} \Gamma^{r}_{\beta\rho}  
    + \frac{1}{r} \mathbf{m}_{rr} \sigma_{\alpha\beta}  
    + 2 \frac{{}^{g} \mathbf{m}_{\alpha\beta}}{r}  
    + \frac{\tilde{\mathbf{m}}_{rr}}{r^{2}} \sigma_{\alpha\beta}  
    + 3 \frac{{}^{g} \tilde{\mathbf{m}}_{\alpha\beta}}{r^{2}}  \\
    &\quad + \frac{\mathbf{g}_{r\alpha,\beta}}{r^{2}}  
    - {}^{\Sph^{2}} \Gamma^{\rho}_{\alpha\beta} \frac{\mathbf{g}_{r\rho}}{r^{2}}  
    + O_{1} \left( r^{-3} \right),
\end{align*}
\begin{align*}
    \Gamma^{\delta}_{\delta r} \Gamma^{r}_{\alpha\beta} &= {}^{b} \Gamma^{\delta}_{\delta r} \, {}^{b} \Gamma^{r}_{\alpha\beta}  
    + 2 \frac{\mathbf{m}_{rr}}{r} \sigma_{\alpha\beta}  
    + \frac{1}{r} \mathbf{m}_{\alpha\beta}  
    + \frac{3}{2} \frac{\trace_{\sigma} \mathbf{m}}{r} \sigma_{\alpha\beta}  
    + 2 \frac{\tilde{\mathbf{m}}_{rr}}{r^{2}} \sigma_{\alpha\beta}  \\
    &\quad  - {}^{\Sph^{2}} \Gamma^{\rho}_{\alpha\beta} \frac{\mathbf{m}_{r\rho}}{r^{2}}  
    + \frac{\mathbf{m}_{r\alpha,\beta} + \mathbf{m}_{r\beta,\alpha}}{r^{2}}  
    + 2 \frac{{}^{g} \tilde{\mathbf{m}}_{\alpha\beta}}{r^{2}}  
    + 2 \frac{\trace_{\sigma} {}^{g} \tilde{\mathbf{m}}}{r^{2}} \sigma_{\alpha\beta}  
    + O_{1} \left( r^{-3} \right),
\end{align*}
\begin{align*}
    \Gamma_{\rho r}^{r}\Gamma_{\alpha\beta}^{\rho} = -\frac{\mathbf{g}_{r\alpha}}{r^{2}}{}^{\Sph
    ^{2}}\Gamma_{\alpha\beta}^{\rho} + O_{1}\left(r^{-3}\right).
\end{align*}

\section{Evolution of hyperboloidal initial data}\label{App:evolution}
Consider the standard hyperboloidal initial data set $(\Hyp^{3},b,b)$, with
\begin{equation*}
    b=\frac{1}{1+r^2}dr^2+d\sigma^2.
\end{equation*}
We consider an evolution of the above initial data set with respect to its associated lapse and shift given by Remark \ref{rmk:standardhyp}. The evolution of the metric $b$ with respect to this class of observers is given by

\begin{equation*}
\begin{aligned}
    \frac{\partial}{\partial\tau}b_{rr} &= 2Nb_{rr} + \mathcal{L}_{X}b_{rr}
    \\& = \frac{2}{\sqrt{1+r^{2}}} - r\sqrt{1+r^{2}}\left(\frac{-2r}{(1+r^{2})^{2}}\right) - \frac{2r^{2}}{(1+r^{2})^{\frac{3}{2}}} - \frac{2}{\sqrt{1+r^{2}}}
    \\& = 0,
\end{aligned}
\end{equation*}
and
\begin{equation*}
    \begin{aligned}
        \frac{\partial}{\partial\tau}b_{\alpha\beta} &= 2Nb_{\alpha\beta} + \mathcal{L}_{X}b_{\alpha\beta} 
        \\& = 2r^{2}\sqrt{1+r^{2}}\sigma_{\alpha\beta} - r\sqrt{1+r^{2}} \cdot2r = 0.
    \end{aligned}
\end{equation*}

The evolution of the second fundamental form $K=b$ also can be seen to give that $\frac{\partial}{\partial\tau}K = 0$, using that $\Ric = -2b$.

\end{appendices}

\bibliography{sn-bibliography}% common bib file
%% if required, the content of .bbl file can be included here once bbl is generated
%\input output.bbl

\end{document}